\theoremstyle{theorem}
\newtheorem{thm}{Theorem}
\theoremstyle{definition}
\newcommand\num{\addtocounter{equation}{1}\tag{\theequation}}
\begin{document}
\title{Optimal Drug Regimen and  Combined Drug Therapy and its Efficacy in the Treatment of COVID-19 :  An Within-Host Modeling Study}
	
\vspace{0.1in}
%{\bf{\Large }}
\author{{\bf\large Bishal Chhetri$^{1,a}$, Vijay M. Bhagat$^{1,b}$, D. K. K. Vamsi$^{*,a}$, Ananth V S$^{a}$,  Bhanu Prakash$^{a}$}, \\ {\bf\large Swapna Muthusamy$^{b},$ Pradeep Deshmukh$^{c},$ Carani B Sanjeevi$^{d, e}$}\hspace{2mm} \\
	{\it\small $^{a}$Department of Mathematics and Computer Science, Sri Sathya Sai Institute of Higher Learning, Prasanthi Nilayam}, \\
	{\it \small Puttaparthi, Anantapur District - 515134, Andhra Pradesh, India}\\
	{\it\small $^{b}$Central Leprosy Teaching and Research Institute - CLTRI, Chennai, India}\\
	{\it\small $^{c}$ Professor and Head, Department of Community Medicine, All India Institute of Medical Sciences - AIIMS, Nagpur,India}\\
	{\it\small $^{d}$  Vice-Chancellor, Sri Sathya Sai Institute of Higher Learning -  SSSIHL, India}\\
	{\it\small $^{e}$ Department of Medicine, Karolinska Institute, Stockholm, Sweden }\\
	{\it\small bishalchhetri@sssihl.edu.in,    vijaydr100@gmail.com, dkkvamsi@sssihl.edu.in$^{*}$,}\\
	{\it\small ananthvs@sssihl.edu.in, prakashdmacs@gmail.com,      swapnamuthuswamy@gmail.com,}\\
	{\it\small	prdeshmukh@gmail.com, sanjeevi.carani@sssihl.edu.in, sanjeevi.carani@ki.se}\\
	{\small $^{1}$ Joint First Authors},
	{ \small $^{*}$ Corresponding Author}
	\vspace{1mm}
}
	\date{}
	\maketitle
	
	\begin{abstract} \vspace{.25cm}
	
	The COVID-19 pandemic has resulted in more than 30.35 million infections and 9, 50, 625 deaths in 212 countries over the last few months. Different drug intervention acting at multiple stages of pathogenesis of COVID-19 can substantially reduce the infection induced mortality. The current within-host mathematical modeling studies deals with the optimal drug regimen and the efficacy of combined therapy in treatment of COVID-19. The drugs/interventions considered include Arbidol, Remdesivir, Inteferon (INF) and Lopinavir/Ritonavir. It is concluded that these drug interventions when administered individually or in combination  reduce the infected cells and viral load. Four scenarios involving administration of single drug intervention, two drug interventions, three drug interventions and all the four  have been discussed.  In all these scenarios the optimal drug regimen is proposed  based on two methods. In the first method these medical  interventions are modeled as control interventions and a corresponding objective function and optimal control problem is formulated. In this setting the optimal drug regimen is proposed. Later using the the comparative effectiveness method the optimal drug regimen is proposed based on basic reproduction number and viral load.  The  average  infected cell count and  viral load decreased the most when all the four interventions were applied together. On the other hand the average susceptible cell count decreased the best when   Arbidol alone was administered.  The basic reproduction number and viral count decreased the best when all the four interventions were applied together reinstating the fact obtained earlier in the optimal control setting. These findings may help physicians with decision making in treatment of life-threatening COVID-19 pneumonia.
	
	\end{abstract}
	
	{\bf{Keywords}}: COVID-19; Inflammatory Mediators;   Drug Interventions; Arbidol; Remdesivir; Interferon; 
	Lopinavir/Ritonavir; Optimal Control Problem; Optimal Drug Regimen; Comparative Effectiveness Study
	\vspace{.12 in}
	
	\section{Introduction} \vspace{.25cm}
	
	\quad The unprecedented pandemic affecting more than 212 countries by infection of Severe Acute Respiratory Syndrome Coronavirus-2 (SARS-CoV-2) causing Covid-19.  There are already 30.35 million  infected individuals and 9, 50, 625 deaths worldwide, and further daily exponential increase \cite{1}. \\
	
	\quad A detailed pathogenesis and a corresponding within-host model involving  crucial inflammatory mediators and the host immune response has been developed and discussed at length by the authors elsewhere  \cite{ref1}. This work dealt with the natural history and course of infection of COVID-19. The authors also briefly discussed about the optimality and effectiveness  of combined therapy involving one or more antiviral and one or more immuno-modulating drugs when  administered together. \\  
	
	\quad  The present work deals with optimal drug regimen and the efficacy of combined therapy in treatment of COVID-19.  The drug interventions considered include Arbidol, Remdesivir, Interferon and Lopinavir/Ritonavir. It is concluded that these drug when administered individually or in combination reduce the  infected cells and viral load significantly. Four scenarios involving administration of single drug, two drugs, three drugs and all the four drugs have been discussed.  In all these scenarios the optimal drug regimen is proposed based on two methods. The first method in the optimal control problem setting and the second method being comparative effectiveness studies.  The average  infected cell count and  viral load decreased the most when all the four interventions were applied together. On the other hand, the average susceptible cell count decreased the best when   Arbidol alone was administered. The comparative effectiveness studies showed that the  basic reproduction number and virus count decreased the best when all the four interventions were applied together reinstating the fact obtained earlier in the optimal control setting. \\
	
	\quad  Some of the important mathematical modelling studies that deal with transmission and spread of COVID-19 at the population level can be found in  \cite{5, 7, 4, 6}. In the recent work  \cite{ea2020host}, an in-host modelling study deals with the qualitative characteristics and estimation of standard parameters of corona viral infections. The optimal control studies involving four clinical drug interventions which is being attempted here is the first of its kind for COVID-19. \\
	
	 {\flushleft \bf{OBJECTIVES }} \vspace{.25cm}
	\begin{itemize}
		\item [1.] 	To investigate the role of pharmaceutical interventions such as  Arbidol, Remdesivir, Interferon and Lopinavir/Ritonavir by incorporating them  as controls at specific sites in the pathogenesis.
		\item [2.] 	To study and compare the dynamics of susceptible, infected cells and viral load with and without these control interventions by studying them as optimal control problems. 
		\item[3.] To propose the optimal drug regimen in four scenarios involving administration of single drug, two drugs, three drugs and all the four drugs  based on the average susceptible, infected cell count, average viral load and basic reproduction number.
		\item [4.] To propose the optimal drug regimen using the comparative effectiveness studies.
	\end{itemize} \vspace{.25cm}

	\begin{minipage}{\textwidth}
		%\color{Blue} 

		{\bf{\textit{Research In Context}}} \\ 
		
		{\bf{\textit{Evidence before the study:}}}  The efficacy of drugs in the treatment of COVID-19 infection was assessed and reported in a few clinical trials and observational studies. The role of combination of anti-microbial agents in achieving the favorable outcome is very well established in several viral diseases including HIV, dengue hemorrhagic fever etc. To determine the utility of combined anti-microbials through the clinical studies will take a long way and considering the current mortalities due to COVID-19 cannot be awaited for.    \\

		{\bf{\textit{Added value of this study:}}}  The current research study is the first hand evidence for the efficacy of combined anti-microbial agents for the treatment of COVID-19. The study assessed the combined efficacy of four therapeutic interventions namely Arbidol, Remdesivir, Interferons and Lopinavir/Ritonavir. Optimal reduction in the viral load and the number of infected cells can be significantly achieved when the above four interventions used in combination as compared with the efficacy of individual therapeutic agent.  \\
		
		{\bf{\textit{Implications of all the available evidence: }}}  The mathematical evidence generated out of this study will pave a way for developing a standardized clinical treatment regimen both on therapeutic and research avenues. \\
		
	\end{minipage}
	\vspace{.25cm} 
	
	{\flushleft \bf{METHODS}} \\
	
	A within-host mathematical model incorporating  four drug interventions was studied as an   optimal control problem. The {\emph{Fillipov Existence Theorem}} was used to obtain an optimal solution \cite{liberzon2011calculus}. Characterization of optimal controls is done using Maximum Principle \cite{liberzon2011calculus}.  Numerical simulations  involving with and without control interventions  were performed for obtaining optimal drug regimen.  Comparative effectiveness  method \cite{garira2020development} was used to study the change in basic reproduction number and virus count  for  single and multiple interventions.   \vspace{.25cm}
	
	\section{Optimal Control Studies}  \vspace{.25cm}
	
	\subsection{Optimal Control Problem Formulation} \vspace{.25cm}
	
	\quad Drugs such as {{Remdesivir}} inhibit RNA-dependent RNA polymerase and drugs  {{ Lopinavir/Ritonavir}}  inhibit the viral protease there by reducing  viral replication \cite{tu2020review}. Interferons are broad spectrum  antivirals, exhibiting both direct inhibitory effect on viral replication and supporting an immune response to clear virus infection \cite{wang2019global}. On the other hand drugs such as Arbidol not only inhibits the viral replication but also blocks the virus replication by inhibiting the fusion of  lipid membranes with host cells \cite{ref2}.  \vspace{.25cm}
	
	\quad Motivated by the above clinical findings in similar lines to control problem in \cite{ref1}, we consider a  control problem with the  drug interventions Arbidol, Remdesivir, Lopinavir/Ritonavir and Interferon  as controls.
	
	\begin{eqnarray}
		\frac{dS}{dt}& =&  \omega \ - \beta SV  -\mu_{1A}(t)S -\mu S  \label{sec2equ1} \\ 
		\frac{dI}{dt} &=& \beta SV \ -  { \bigg(d_{1}  + d_{2}  +  d_{4} + d_{5}+ d_{6}\bigg)I} \nonumber \\ &-&\bigg(\mu_{2Rem}(t)+\mu_{2INF}(t)+\mu_{2A}(t)+\mu_{2Lop/Rit}(t)\bigg)I-\mu I   \label{sec2equ2}\\ 
		\frac{dV}{dt} &=& \bigg( \alpha-(\mu_{3Rem}(t)+\mu_{3INF}(t)+\mu_{3A}(t)+\mu_{3Lop/Rit}(t))\bigg) I   \nonumber \\
		&-&  \bigg( b_{1}  + b_{2}   +  b_{4}+ b_{5} + b_{6}\bigg)V    \ -  \mu_{1} V \label{sec2equ3}
	\end{eqnarray}  \\
	
	Let $U_1 = (\mu_{1A}, \mu_{2A}, \  \mu_{3A}), U_2 = (\mu_{2Rem}, \mu_{3Rem}), \  U_3 = (\mu_{2INF}, \mu_{3INF}), \ U_4 = (\mu_{2Lop/Rit}, \mu_{3Lop/Rit}). $ 
	
	\begin{table}[ht!]
		\centering 
		\begin{tabular}{|l|l|} 
			\hline
			\textbf{Parameters} &  \textbf{Biological Meaning} \\  
			\hline 
			$S$ & Healthy Type II Pneumocytes  \\
			\hline
			$I$ & Infected Type II Pneumocytes  \\
			\hline
			$\omega$ & Natural birth rate of Type II Pneumocytes \\
			\hline
			$V$ & Viral load  \\
			\hline
			$\beta$ & Rate at which healthy Pneumocytes are infected  \\
			\hline
			$\alpha$ & Burst rate of virus particles \\
			\hline
			$\mu$ & Natural death rate of Type II Pneumocytes \\
			\hline
			$\mu_{1}$ & Natural death rate of virus \\
			\hline
			$d_{1}, \hspace{.25cm} d_{2}, \hspace{.25cm} d_{4}, \hspace{.25cm} d_{5}, \hspace{.25cm} d_{6}$ & Rates at which Infected Pneumocytes are removed because\\
			&of the release of cytokines and chemokines  IL-6\\
			&  TNF-$\alpha$,  \hspace{.2cm}CCL5, \hspace{.2cm}CXCL8 , \hspace{.2cm}CXCL10   \hspace{.2cm} respectively   \\ 
			\hline
			$b_{1}, \hspace{.25cm} b_{2}, \hspace{.25cm} b_{4}, \hspace{.25cm} b_{5}, \hspace{.25cm} b_{6}$ & Rates at which Virus is removed because of\\
			& the release of cytokines and chemokines  IL-6\\
			&  TNF-$\alpha$,   \hspace{.2cm}CCL5, \hspace{.2cm}CXCL8 , \hspace{.2cm}CXCL10   \hspace{.2cm} respectively   \\ 
			\hline
			$	u_{1A}, \hspace{.25cm} u_{2A}, \hspace{.25cm} u_{3A} $ & Rates at which susceptible cells,\\
			&infected cells and viral load are reduced\\
			& due to Arbidol drug.\\
			\hline
			$u_{2Rem}, \hspace{.25cm} u_{3Rem}$ & Rates at which infected cells and the \\
			& viral load are reduced due to Remdesivir\\
			\hline
			$u_{2INF}, \hspace{.25cm} u_{3INF}$ & Rates at which infected cells and the \\
			& viral load are reduced due to Interferon respectively\\
			\hline
			$u_{2Lop/Rit}, \hspace{.25cm} u_{3Lop/Rit}$ & Rates at which infected cells and the \\
			&viral load are reduced due to Lopinavir/Ritonavir respectively\\
			\hline
		\end{tabular}
	\end{table} \vspace{.25cm}	
	
	A detailed natural history including positivity and boundedness for the system (\ref{sec2equ1}) - (\ref{sec2equ3})  without control interventions is done by the author's team in \cite{ref1}.

	\textbf{Definition of the objective function}
	
	\begin{equation}
		J(U_1, U_2, U_3, U_4 ) = \int_{0}^{T} \bigg(A_1 U_1^2 + A_2 U_2^2 + A_3 U_3^2 + A_4 U_4^2 - I(t) - V(t) \bigg) dt \label{temp}
	\end{equation} 
	
	To reduce the complexity of the problem here we choose to model the control efforts via a linear combination of the quadratic terms. Also when the objective function is quadratic with respect to the control, diﬀerential equations arising from optimization have a known solution. Other functional forms sometimes lead to systems of diﬀerential equations that are diﬃcult to solve \cite{djidjou2020optimal}, \cite{lee2010optimal}.
	
	The integrand of the cost function (\ref{temp}), denoted by
	\begin{displaymath}
		L(S,I,V, U_1, U_2, U_3, U_4 ) = 
		\bigg(A_1 U_1^2 + A_2 U_2^2 + A_3 U_3^2 + A_4 U_4^2 - I(t) - V(t)\bigg)\end{displaymath}
	is called the Lagrangian of the running cost.\\
	
	Here the cost function (\ref{temp}) represents the benefits of each of the interventions and the number of infected cells and viral load throughout the observation period. Our goal is to maximize the benefits of each of the interventions and minimize the infected cell and virus population.\\
	
	The coefficients $A_{i}$, for $i = 1, 2, 3, 4$  are the constants related to the benefits of each of the interventions and their values will be chosen based on the value of the hazard ratio of individual drugs.\\
	
	The admissible solution set for the optimal control problem  (\ref{sec2equ1}) - (\ref{temp}) is given by,
	\begin{displaymath}
		\Omega = \left\{ (S, I, V, U_1, U_2, U_3, U_4)\; | \; S,  I \ and \ V  \text{that satisfy} \ (\ref{sec2equ1})-(\ref{sec2equ3}) \ \forall \ U_i \in U \right\} \end{displaymath}
	
	where $U$ is the set of all admissible controls given by \\
	
	$U = \{U_1 = (\mu_{1A}(t), \mu_{2A}(t), \mu_{3A}(t)), \  U_2 = (\mu_{2Rem}(t), \mu_{3Rem}(t)), \ U_3 = (\mu_{2INF}(t), \mu_{3INF}(t)), \ U_4 = (\mu_{2Lop/Rit}(t), \mu_{3Lop/Rit}(t))\}$:
	
	$	\mu_{1A}(t)) \in [0,\mu_{1A} max], \ \mu_{2A}(t) \in [0,\mu_{2A} max], \ \mu_{3A}(t) \in [0,\mu_{3A} max],  \ \mu_{2Rem}(t) \in [0,\mu_{2Rem} max], \ \mu_{2INF}(t) \in [0,\mu_{2INF} max], \ \mu_{2Lop/Rit}(t) \in [0,\mu_{2Lop/Rit} max], \ \mu_{3Rem}(t) \in [0,\mu_{3Rem} max], \ \mu_{3INF}(t) \in [0,\mu_{3INF} max], \ \mu_{3Lop/Rit}(t) \in [0,\mu_{3Lop/Rit} max], \ t \in [0,T]\}.$
	
	\vspace{.25cm}
	
	\section{Existence of Optimal Controls} \vspace{.25cm}
	
	\begin{thm}
		There exists a 9-tuple of optimal controls $(\mu_{1A}^{*}(t), \mu_{2A}^{*}(t), \mu_{3A}^{*}(t),
		\mu_{2Rem}^{*}(t), \mu_{3Rem}^{*}(t),\\
		\mu_{2INF}^{*}(t), \mu_{3INF}^{*}(t), \mu_{2Lop/Rit}^{*}(t), \mu_{3Lop/Rit}^{*}(t))$ in the set of admissible controls $U$ such that the cost functional is maximized i.e., 
		
		$J = \max_{(\mu_{1A}, \mu_{2A}, \mu_{3A}, \mu_{2Rem}, \mu_{3Rem}, \mu_{2INF}, \mu_{3INF}, \mu_{2Lop/Rit}, \mu_{3Lop/Rit}) \in U} \bigg \{ J[\mu_{1A}(t), \mu_{2A}(t), \mu_{3A}(t)),
		\mu_{2Rem}(t)), \mu_{3Rem}(t)),\\
		\mu_{2INF}(t)), \mu_{3INF}(t)), \mu_{2Lop/Rit}(t)), \mu_{3Lop/Rit}(t)]\bigg\}$
		corresponding to the optimal control problem (\ref{sec2equ1}) - (\ref{temp}).
	\end{thm}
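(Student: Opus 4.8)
The plan is to prove existence by verifying the hypotheses of the Filippov existence theorem \cite{liberzon2011calculus} cited in the Methods (equivalently, the standard Fleming--Rishel conditions), which reduce the attainment of the maximum of $J$ to a short list of structural conditions on the control set $U$, on the state dynamics \eqref{sec2equ1}--\eqref{sec2equ3}, and on the Lagrangian $L$. Throughout I would work in the natural function-space setting in which the admissible controls are measurable functions on $[0,T]$ taking values in the prescribed boxes, and I would lean on the positivity and boundedness already established for the uncontrolled model in \cite{ref1}.

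First I would dispose of the easy conditions. The no-drug choice (all nine controls identically zero) lies in $U$ and generates a well-defined trajectory, so the admissible set $\Omega$ is nonempty; and $U$, being the Cartesian product of the closed intervals $[0,\mu_{\cdot}\max]$, is a compact and convex subset of $\mathbb{R}^9$. Next I would establish uniform a priori bounds on the states. Since every control enters \eqref{sec2equ1}--\eqref{sec2equ3} with a negative sign---the drugs only augment the removal rates of $S$, $I$ and $V$---the bounds obtained for the uncontrolled system in \cite{ref1} persist uniformly over all admissible controls: $S$ is controlled by the birth--death balance $\omega/\mu$, and $I$ and $V$ follow by comparison. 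This confines every admissible trajectory to a fixed compact region.

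On this compact region the regularity and growth conditions are routine: the right-hand sides of \eqref{sec2equ1}--\eqref{sec2equ3} are affine in the control variables with continuous, bounded coefficients, so the vector field is dominated by a linear function of the states and controls and the needed continuity estimates hold. What then remains is the condition on the Lagrangian
\[
L = A_1U_1^2+A_2U_2^2+A_3U_3^2+A_4U_4^2-I-V,
\]
namely that the augmented velocity set $\{(L+\eta,\,f):(U_1,U_2,U_3,U_4)\in U,\ \eta\le 0\}$ is convex and that $L$ obeys the one-sided growth bound appropriate to the maximization.

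I expect this last condition to be the main obstacle. Because the dynamics are affine in the controls, convexity of the augmented velocity set is equivalent to a concavity statement for $L$ in the control vector; the quadratic benefit terms $A_iU_i^2$ must therefore be handled with care, since it is precisely their sign, together with the compactness of $U$, that governs whether $J$ is upper semicontinuous along weakly convergent sequences of controls. I would spend the bulk of the argument making this semicontinuity/convexity step precise on the compact state--control region---where the boundedness of $U$ renders the growth bound automatic---after which the Filippov theorem furnishes an optimal $9$-tuple $(\mu_{1A}^{*},\ldots,\mu_{3Lop/Rit}^{*})\in U$ attaining the maximum of $J$.
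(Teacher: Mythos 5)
Your checklist is the same one the paper uses: the structural existence conditions of Fleming--Rishel/Filippov type from \cite{liberzon2011calculus}. You handle the routine items in essentially the same way the paper does --- nonemptiness of $\Omega$ (you take the zero control; the paper invokes positivity/boundedness from \cite{ref1} together with Picard--Lindel\"{o}f), closedness and convexity of the box $U$, affine dependence of \eqref{sec2equ1}--\eqref{sec2equ3} on the controls, and uniform a priori bounds on trajectories since the controls only augment removal rates. Up to this point your proposal and the paper's proof coincide.

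The genuine gap is in your final step, and you have actually located it yourself without closing it. For the \emph{maximization} problem as stated in the theorem, the direct method needs $J$ to be upper semicontinuous along weakly (or weak-$*$) convergent sequences of admissible controls; given the affine dynamics, this is exactly concavity of $L$ in the control vector, i.e.\ convexity of the augmented velocity set you write down. But with $A_i>0$ the control part of $L = A_1U_1^2+\dots+A_4U_4^2-I-V$ is strictly \emph{convex}, so $\int \sum_i A_iU_i^2\,dt$ is weakly lower semicontinuous and not upper semicontinuous; the concavity you propose to ``make precise'' on the compact state--control region is simply false, and no amount of work there will produce it --- compactness of the box does not repair the failure, because the obstruction is oscillation of minimizing-type sequences, not unboundedness. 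So the Filippov theorem, as you invoke it, does not apply, and the proposal does not establish the theorem. It is worth noting how this compares with the paper: the paper never confronts this issue, because its condition 3 is the \emph{minimization}-form requirement (convexity of $L$ on $U$ plus a coercive minorant $g$, chosen with $c=\min\{A_1,\dots,A_4\}$), which the convex quadratic Lagrangian does satisfy; its verification is therefore internally coherent only if one reads the problem as minimizing the cost functional \eqref{temp}, not maximizing it as the theorem asserts. If you want existence for the maximization as literally stated, the route is not convexity but monotonicity: each control enters the dynamics only through removal or production-reduction terms (so increasing it can only decrease $I$ and $V$, hence increase $-I-V$) and enters $L$ with a positive sign, which suggests $u\equiv u_{\max}$ is optimal --- but that is a comparison argument outside the Filippov framework, and it is made by neither you nor the paper. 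Either recast the problem as a minimization (after which your verified conditions suffice verbatim) or supply such a direct argument; as written, the key step of your proof is a condition that fails.
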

	
	\begin{proof}
		
		In order to show the existence of optimal control functions, we will show that the following conditions are satisfied : 
		
		\begin{enumerate}
			\item  The solution set for the system (\ref{sec2equ1}) - (\ref{temp}) along with bounded controls must be non-empty, $i.e.$, $\Omega \neq \phi$.
			
			\item  U is closed and convex and system should be expressed linearly in terms of the control variables with coefficients that are functions of time and state variables.
			
			\item The Lagrangian L should be convex on U and $L(S, I, V, \mu_{1A}, \mu_{2A}, \mu_{3A}, \mu_{2Rem}, \mu_{3Rem}, \mu_{2zinc}, \mu_{3zinc}, \\
			\mu_{2Lop/Rit}, \mu_{3Lop/Rit}) \geq g(\mu_{1A}, \mu_{2A}, \mu_{3A},  \mu_{2Rem}, \mu_{3Rem}, \mu_{2INF}, \mu_{3INF}, \mu_{2Lop/Rit}, \mu_{3Lop/Rit})$, where $g$ is a continuous function of control variables such that $|(\mu_{1A}, \mu_{2A}, \mu_{3A}, \mu_{2Rem}, \mu_{3Rem}, \mu_{2INF}, \mu_{3INF}, \mu_{2Lop/Rit}, \mu_{3Lop/Rit})|^{-1} \\ g(\mu_{1A}, \mu_{2A}, \mu_{3A}, \mu_{2Rem}, \mu_{3Rem}, \mu_{2INF}, \mu_{3INF}, \mu_{2Lop/Rit}, \mu_{3Lop/Rit}) \to \infty$ \\ whenever  $|(\mu_{1A}, \mu_{2A}, \mu_{3A}, \mu_{2Rem}, \mu_{3Rem}, \mu_{2INF}, \mu_{3INF}, \mu_{2Lop/Rit}, \mu_{3Lop/Rit})| \to \infty$, where $|.|$ is an $l^2(0,T)$ norm.
		\end{enumerate}
		
		Now we will show that each of the conditions are satisfied : 
		
		1. From positivity and boundedness of solutions of the system  (\ref{sec2equ1}) - (\ref{sec2equ3}), all solutions are bounded for each bounded control variable in $U$.
		
		Also, the right hand side of the system (\ref{sec2equ1}) - (\ref{sec2equ3}) satisfies Lipschitz condition with respect to state variables. 
		
		Hence, using the positivity and boundedness condition and the existence of solution from Picard-Lindelof Theorem \cite{makarov2013picard}, we have satisfied condition 1.
		
		2. $U$ is closed and convex by definition. Also, the system (\ref{sec2equ1}) - (\ref{sec2equ3}) is clearly linear with respect to controls such that coefficients are only state variables or functions dependent on time. Hence condition 2 is satisfied.
		
		3. Choosing $g(\mu_{1A}, \mu_{2A}, \mu_{3A},   \mu_{2Rem}, \mu_{3Rem}, \mu_{2INF}, \mu_{3INF}, \mu_{2Lop/Rit}, \mu_{3Lop/Rit}) \\= c(\mu_{1A}, \mu_{2A}, \mu_{3A},  \mu_{2Rem}, \mu_{3Rem}, \mu_{2INF}, \mu_{3INF}, \mu_{2Lop/Rit}, \mu_{3Lop/Rit})$ such that $c = min\left\{A_{1}, A_{2},A_{3}, A_{4}\right\}$, we can satisfy the condition 3.
		
		Hence there exists a control 9-tuple $(\mu_{1A}, \mu_{2A}, \mu_{3A},  \mu_{2Rem}, \mu_{3Rem}, \mu_{2INF}, \mu_{3INF}, \mu_{2Lop/Rit}, \mu_{3Lop/Rit})\in U$ \\ that maximizes the cost function (\ref{temp}).
	\end{proof} \vspace{.25cm}
	
	\section{Characteriztion of Optimal Controls}\vspace{.25cm}
	
	We now obtain the necessary conditions for optimal control functions using the Pontryagin's Maximum Principle \cite{liberzon2011calculus} and also obtain the characteristics of the optimal controls.
	
	The Hamiltonian for this problem is given by \\
	
	$H(S, I , V, \mu_{1A}, \mu_{2A}, \mu_{3A}, \mu_{2Rem}, \mu_{3Rem}, \mu_{2INF}, \mu_{3INF}, \mu_{2Lop/Rit}, \mu_{3Lop/Rit},\lambda) \\
	= L(S, I, V, \mu_{1A}, \mu_{2A}, \mu_{3A}, \mu_{2Rem}, \mu_{3Rem}, \mu_{2INF}, \mu_{3INF}, \mu_{2Lop/Rit}, \mu_{3Lop/Rit}) + \lambda_{1} \frac{\mathrm{d} S}{\mathrm{d} t} +\lambda _{2}\frac{\mathrm{d} I}{\mathrm{d} t}+ \lambda _{3} \frac{\mathrm{d} V}{\mathrm{d} t}$\\
	
	Here $\lambda$ = ($\lambda_{1}$,$\lambda_{2}$,$\lambda_{3}$) is called co-state vector or adjoint vector.
	
	Now the Canonical equations that relate the state variables to the co-state variables are  given by 
	
	\begin{equation}
		\begin{aligned}
			\frac{\mathrm{d} \lambda _{1}}{\mathrm{d} t} &= -\frac{\partial H}{\partial S}\\
			\frac{\mathrm{d} \lambda _{2}}{\mathrm{d} t} &= -\frac{\partial H}{\partial I}\\
			\frac{\mathrm{d} \lambda _{3}}{\mathrm{d} t} &= -\frac{\partial H}{\partial V}
		\end{aligned}
	\end{equation}
	
	Substituting the Hamiltonian value gives the canonical system 
	
	\begin{equation}
		\begin{aligned}
			\frac{\mathrm{d} \lambda _{1}}{\mathrm{d} t} &= \lambda _{1}(\beta V+\mu+\mu_{1A})-\lambda _{2} \beta V\\
			\frac{\mathrm{d} \lambda _{2}}{\mathrm{d} t} &= 1+\lambda _{2}\bigg(x+(\mu_{2A}+\mu_{2Rem}+\mu_{2INF}+\mu_{2Lop/Rit}+\mu)\bigg)-\lambda _{3} \bigg(\alpha -(\mu_{3A}+\mu_{3Rem}+\mu_{3INF}+\mu_{3Lop/Rit})\bigg)\\
			\frac{\mathrm{d} \lambda _{3}}{\mathrm{d} t} &= 1+\lambda _{1}\beta S-\lambda _{2}\beta S+\lambda _{3}(y+\mu_{1})
		\end{aligned}
	\end{equation}
	
	where $x=d_{1}+d_{2}+d_{4}+d_{5}+d_{6}$and $y=b_{1}+b_{2}+b_{4}+b_{5}+b_{6}$ 	along with transversality conditions
	$ \lambda _{1} (T) = 0, \  \lambda _{2} (T) = 0, \  \lambda _{3} (T) = 0. $
	
		Now, to obtain the optimal controls, we will use the Hamiltonian minimization condition 
	$ \frac{\partial H}{\partial u_{i}}$ = 0 , at  $\mu^{*}.$
	
	Differentiating the Hamiltonian and solving the equations, we obtain the optimal controls as 
	
	\begin{eqnarray*}
		\mu_{1A}^{*} &=& \min\bigg\{ \max\bigg\{\frac{\lambda _{1}S}{2A_{1}},0 \bigg\}, u_{1A}max\bigg\}\\
		\mu_{2A}^{*} &= &\min\bigg\{ \max\bigg\{\frac{\lambda _{2}I}{2A_{1}},0 \bigg\}, u_{2A}max\bigg\}\\
		\mu_{3A}^{*}& = &\min\bigg\{ \max\bigg\{\frac{\lambda _{3}I}{2A_{1}},0 \bigg\}, u_{3A}max\bigg\}\\
		%\mu_{2INF}^{*} &= &\min\bigg\{ \max\bigg\{\frac{\lambda _{2}I}{2A_{2}},0 \bigg\}, u_{12}max\bigg\}\\
		%\mu_{3INF}^{*}& = &\min\bigg\{ \max\bigg\{\frac{\lambda _{3}V}{2A_{2}},0 \bigg\}, u_{2}max\bigg\}\\
		\mu_{2Rem}^{*} &= &\min\bigg\{ \max\bigg\{\frac{\lambda _{2}I}{2A_{2}},0 \bigg\}, u_{2Rem}max\bigg\}\\
		\mu_{3Rem}^{*}& = &\min\bigg\{ \max\bigg\{\frac{\lambda _{3}I}{2A_{2}},0 \bigg\}, u_{3Rem}max\bigg\}\\
		\mu_{2INF}^{*} &= &\min\bigg\{ \max\bigg\{\frac{\lambda _{2}I}{2A_{3}},0 \bigg\}, u_{2INF}max\bigg\}\\
		\mu_{3INF}^{*}& = &\min\bigg\{ \max\bigg\{\frac{\lambda _{3}I}{2A_{3}},0 \bigg\}, u_{3INF}max\bigg\}\\
		\mu_{2Lop/Rit}^{*} &= &\min\bigg\{ \max\bigg\{\frac{\lambda _{2}I}{2A_{4}},0 \bigg\}, u_{2Lop/Rit}max\bigg\}\\
		\mu_{3Lop/Rit}^{*}& = &\min\bigg\{ \max\bigg\{\frac{\lambda _{3}I}{2A_{4}},0 \bigg\}, u_{3Lop/Rit}max\bigg\}
	\end{eqnarray*}

\section{Optimal Drug Regimen} \vspace{.25cm}

In this section, we perform numerical simulations to understand the efficacy of single and multiple drug interventions and propose the optimal drug regimen in these scenarios. This is done by studying the effect of the corresponding controls  on the  dynamics of the system (\ref{sec2equ1}) - (\ref{sec2equ3}). 

The efficacy of various combinations of controls considered are:

\begin{itemize}
	\item [1.] Single drug/intervention administration.
	\item [2.] Two drugs/interventions administration.
	\item [3.] Three drugs/interventions administration.
	\item [4.] All the four drugs/interventions administration.
\end{itemize}	

The theoretical results obtained  are validated for a  set of model parameters obtained using $ode23$ solver in MATLAB. For our simulations, we have taken the total number of days as $T =30$. The values of $\omega,$ $\mu$ and $\mu_{1},$ $\alpha$ are taken  from \cite{2} and \cite{ea2020host} respectively.  The rest of the  parameter values of the model are estimated minimizing the root mean square difference between the model predictive output and the experimental data chosen from \cite{wolfel2020virological, qin2020dysregulation}.  Based on the above, the parameter values chosen for the model 1  are summarized in the following table.\\

\begin{table}[ht!]
	\begin{center}
		\begin{tabular}{|c|c|c|c|c|c|c|c|c|c|c|c|c|c|c|c|c|}
			\hline
			$\omega$ & $\beta$ & $\mu$ & $\mu_{1}$ & $\alpha$ & $d_{1}$ & $d_{2}$ & $d_{4}$ & $d_{5}$ & $d_{6}$ &  $b_{1}$ & $b_{2}$  & $b_{4}$ & $b_{5}$ & $b_{6}$ \\
			\hline
			10 & 0.05& .1 & 1.1 & .5 & 0.027& 0.22  &0.428 & 0.01 & 0.01 & 0.1& .1 & .11 & .1 & .07  \\
			\hline
		\end{tabular}
	\end{center}
\end{table}

We first solve the state system numerically using Fourth Order Runge-Kutta method in MATLAB without any interventions. We take the initial values of state variables to be $S(0) = 3.2 \times 10^5, I(0) = 0, V(0) = 5$ and  the initial values of the control parameters as zeros.

Now, to simulate the system with controls, we use the Forward-Backward Sweep method starting with the initial values of controls to be zero and solve the state system forward in time. Following this we solve the adjoint state system backward in time due to the transversality conditions, using the optimal state variables and initial values of optimal controls which are zero4.

Now, using the values of adjoint state variables, the values of optimal control are updated and with these updated control variables, we go through this process again. We continue this till the convergence criterion is met \cite{liberzon2011calculus}. 

In Survival analysis, Hazard Ratio (HR) plays a crucial role in determining the rate at which the people treated by drug may suffer a certain complication per unit time as the control population. Larger the hazard ratio, more harmful the drug to be administered. We use this concept in assigning weights to our objective function in our  model. In the following table we enlist the  the hazard ratios for the four drugs considered in this work.

\begin{table}[htp!]
	\begin{center}
		\begin{tabular}{ | c | c | c | c | }
			\hline \hline
			\textbf{No.} & \textbf{Drug} & \textbf{HR} & \textbf{Source} \\
			\hline
			1 & Arbidol & 0.183 & \cite{arbidolHR} \\
			\hline
			2 & Remdesivir & 0.33 & \cite{remdesivirHR} \\
			\hline
			3 & Interferon & 0.375 & \cite{IFNhazard} \\
			\hline
			4 & Lopinavir/Ritonavir & 0.4 & \cite{lopinavirHR} \\
			\hline
		\end{tabular}
		\caption{Hazard Ratios (HR) for drugs considered}
	\end{center} \label{t0}
\end{table}

Based on the hazard ratio and effectiveness of the drugs in treatment of Covid-19 we choose the positive weights  for objective coefficients as $A_{1}$ = 500, $A_{2}$ = 400, $A_{3}$ = 250, $A_{4}$ = 200. $A_{1}$ is chosen high compared to other coefficients because it has the least hazard ratio.

\subsection{WITHOUT ANY DRUGS / INTERVENTIONS} \vspace{.25cm}

In this section we simulate the behavior of susceptible, infected  cells and viral load over time. As can be seen from figure \ref{c1}   the susceptible cells reduce and the infected cells increase exponentially due to the increase in viral load over a period of time.

\begin{figure}[hbt!]
	\begin{center}
		\includegraphics[width=4in, height=2.5in, angle=0]{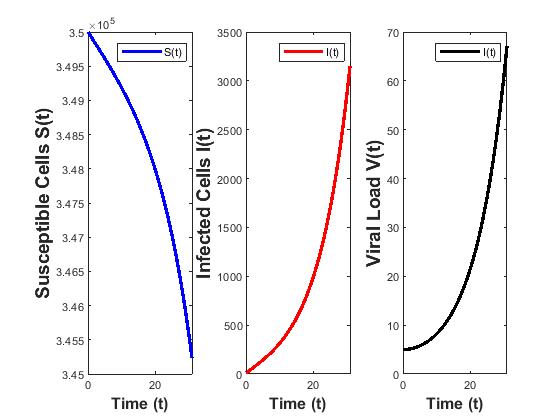}
		\caption{ Figure depicting the  $S, I, V$ populations   without control interventions over the time. The exponential growth of the infected cells and viral load can be observed. }  \label{c1}
	\end{center}
\end{figure}

\vspace{.25cm}

\subsection{SINGLE DRUG / INTERVENTION} \vspace{.25cm}

In this section we study the dynamics  of susceptible, infected  cells and viral load when  each of these four drugs is administered individually.  Figures  \ref{c2}, \ref{c3}, \ref{c4} depict the susceptible, infected population and viral load.  \vspace{.25cm}

\begin{table}[ht!]
	\centering 
	\begin{tabular}{|l|l|l|l|} 
		\hline
		\textbf{Drug Combinations} &  \textbf{Avg Susceptible cells}&  \textbf{Avg Infected cells}& \textbf{Avg Viral load} \\ 
		\hline 
		$U_1=0,U_2=0,U_3=0,U_4=U_4^*$ & 3.4909 $\times$ 10$^5$ & 325.9436 & 6.8030 \\
		\hline
		$U_1=0,U_2=0,U_3=U_3^*,U_4=0$ & 3.4908 $\times$ 10$^5$ & 334.6489   & 7.0864   \\
		\hline
		$U_1=0,U_2=U_2^*,U_3=0,U_4=0$ & 3.4907 $\times$ 10$^5$ & 348.6319 & 7.5458 \\
		\hline
		$U_1=U_1^*,U_2=0,U_3=0,U_4=0$ & 3.4901 $\times$ 10$^5$ & 353.5103 & 7.7088\\
		\hline
		$U_1=0,U_2=0,U_3=0,U_4=0$ &3.4904 $\times$ 10$^5$& 373.8252& 8.4191 \\ 
		\hline
	\end{tabular} 
	\caption{Table depicting the average values of the susceptible cells, infected cells and the viral load with respect to each of these drug interventions when administered individually. } \label{t1}
\end{table} \vspace{.25cm}	

\begin{figure}[hbt!]
	\begin{center}
		\includegraphics[width=4in, height=2.4in, angle=0]{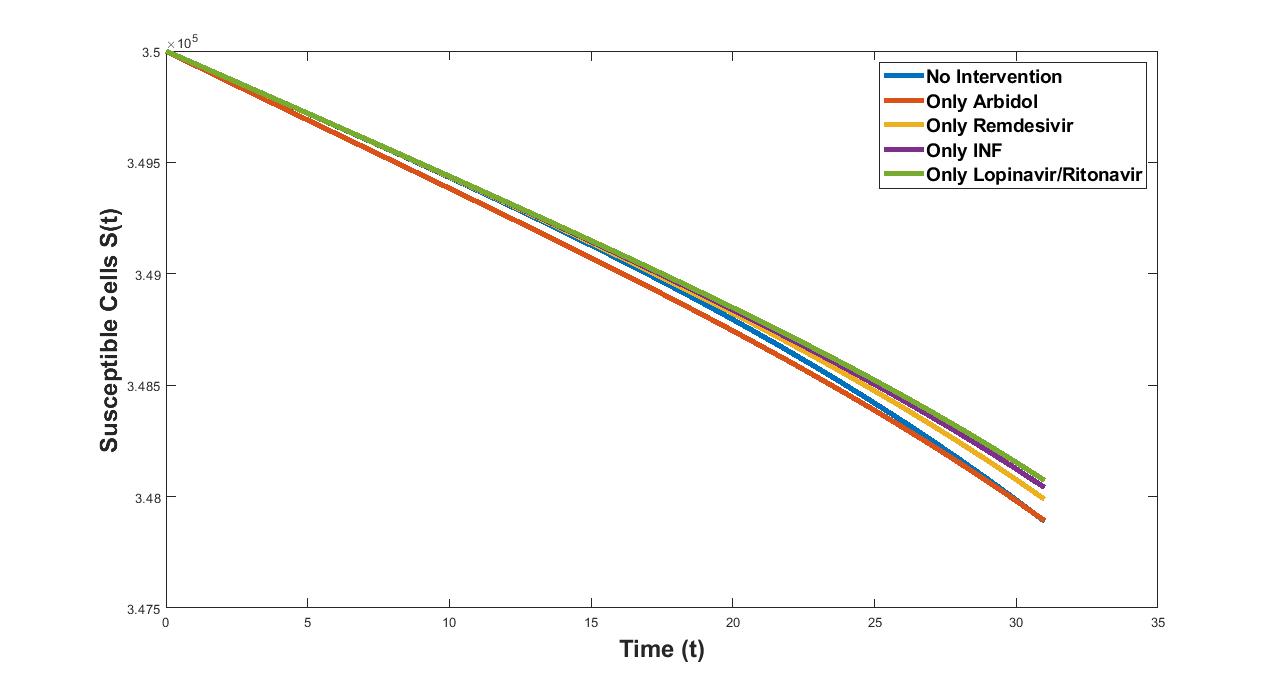}
		\caption{    Figure depicting the dynamics of Susceptible cells ($S$)   under the optimal controls $U_{1}^{*}, U_{2}^{*}$, $U_{3}^{*},U_{4}^{*}.$ Each curve represents the dynamics  of susceptible cells with respect to  a single control intervention.} \label{c2}
	\end{center}
\end{figure}

\begin{figure}[hbt!]
	\begin{center}
		\includegraphics[width=4in, height=2.4in, angle=0]{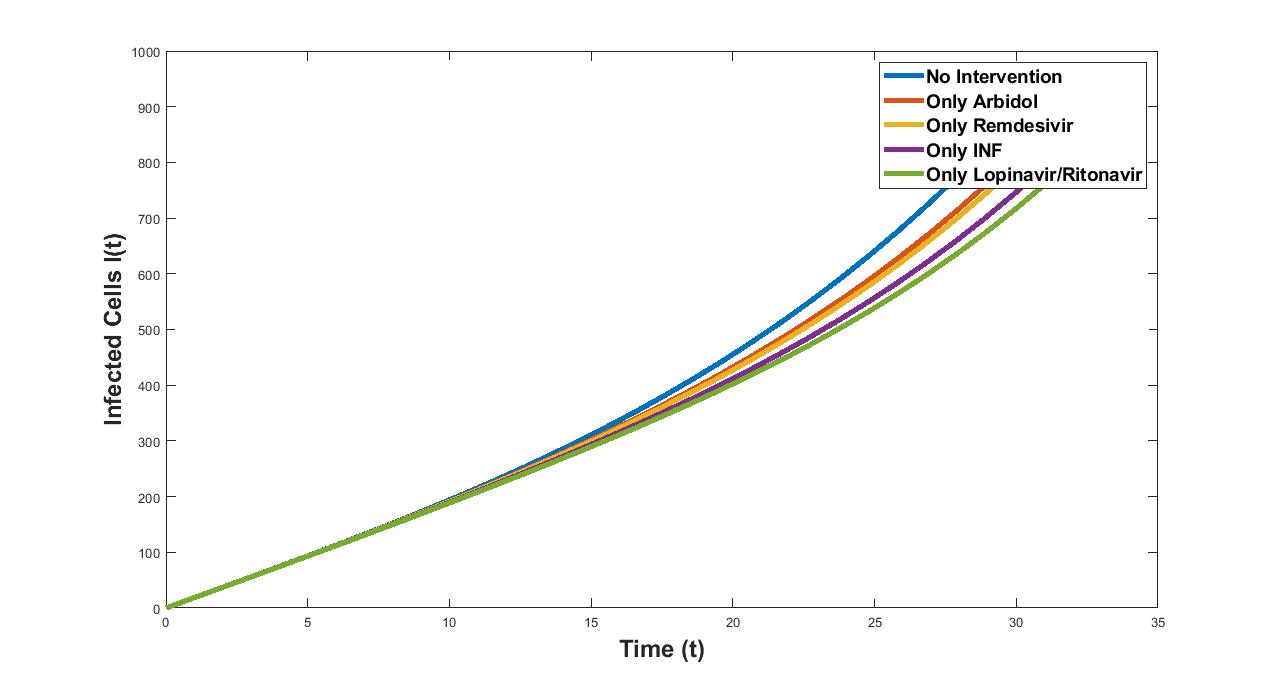}
		\caption{   Figure depicting the dynamics of Infected cells ($I$)   under the optimal controls $U_{1}^{*}, U_{2}^{*}$, $U_{3}^{*},U_{4}^{*}.$ Each curve represents the dynamics  of infected cells with respect to a single control intervention.} \label{c3}
	\end{center}
\end{figure}

\begin{figure}[hbt!]
	\begin{center}
		\includegraphics[width=4in, height=2.4in, angle=0]{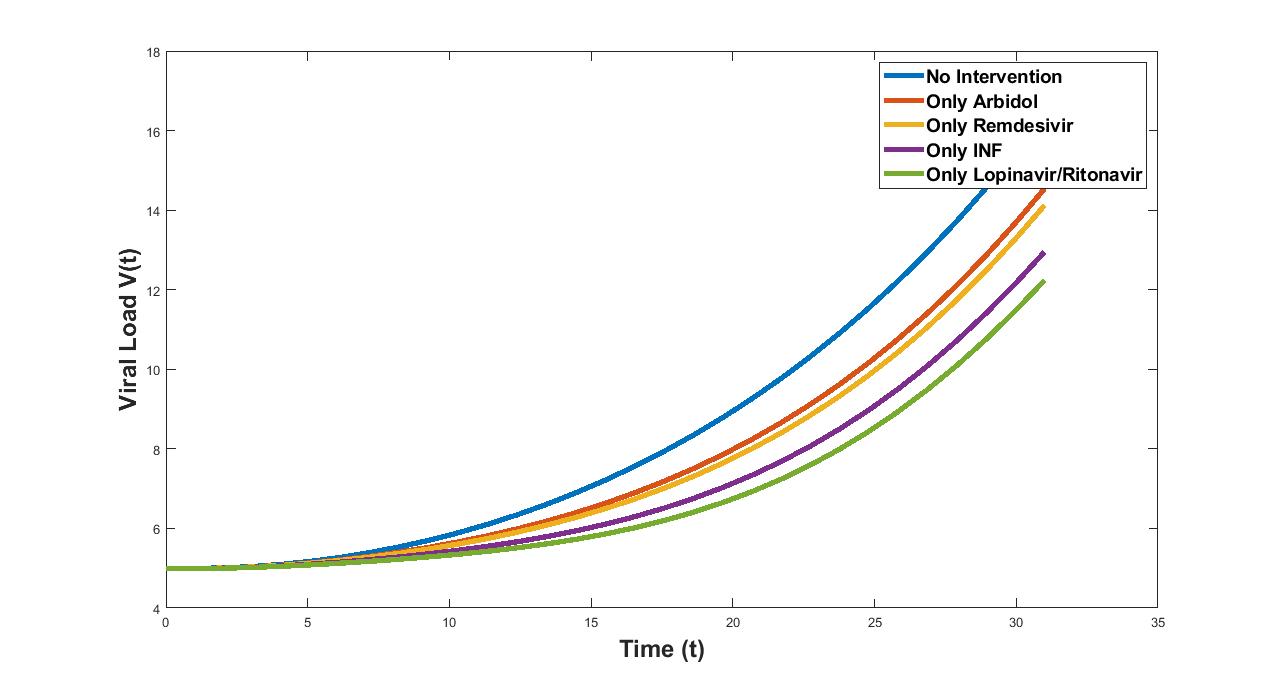}
		\caption{   Figure depicting the dynamics of Viral load ($I$)   under the optimal controls $U_{1}^{*}, U_{2}^{*}$, $U_{3}^{*},U_{4}^{*}.$ Each curve represents the dynamics  of viral load with respect to  a  single  control intervention. } \label{c4}
	\end{center}
\end{figure} \vspace{.75cm}

From the  table {\ref{t1}}  the average values of the susceptible cells, infected cells and the viral load with respect to each of these drug interventions when administered individually is listed. From this table {\ref{t1}} it can be seen that the drug Lopinavir/Ritonavir ($U_4 = U_4^*$) reduces the infected cells and viral load the best compared to other drugs when administered  individually followed by drug INF ($U_3 = U_3^*$). On the other hand drug Arbidol ($U_1 = U_1^*$) does the best job in reducing the susceptible cells. \vspace{.25cm}

\subsection{TWO DRUGS / INTERVENTIONS}  \vspace{.25cm}

In this section we study the dynamics  of susceptible, infected  cells and viral load when   two  control interventions are administered at a time.  Figures  \ref{c5}, \ref{c6}, \ref{c7} depict the susceptible, infected population and viral load.  \vspace{.4cm}

\begin{table}[ht!]
	\centering 
	\begin{tabular}{|l|l|l|l|} 
		\hline
		\textbf{Drug Combinations} &  \textbf{Avg Susceptible cells}&  \textbf{Avg Infected cells}& \textbf{Avg Viral load} \\
		\hline
		$U_1=0,U_2=0,U_3=U_3^*,U_4=U_4^*$ & 3.4912 $\times$ 10$^5$ & 295.1982 & 5.8179 \\
		\hline
		$U_1=0,U_2=U_2^*,U_3=0,U_4=U_4^*$ & 3.4911 $\times$ 10$^5$ & 306.0408 & 6.1623  \\
		\hline
		$U_1=U_1^*,U_2=0,U_3=0,U_4=U_4^*$ &3.4906 $\times$ 10$^5$& 309.8094 & 6.2838 \\ 
		\hline
		$U_1=0,U_2=U_2^*,U_3=U_3^*,U_4=0$ & 3.4911 $\times$ 10$^5$ & 313.7398 & 6.4089\\
		\hline
		$U_1=U_1^*,U_2=0,U_3=U_3^*,U_4=0$ &3.4905 $\times$ 10$^5$ & 317.6816 & 6.5368 \\
		\hline
		$U_1=U_1^*,U_2=U_2^*,U_3=0,U_4=0$ &3.4904 $\times$ 10$^5$ & 330.2106& 6.9428 \\ 
		\hline
		$U_1=U_2=U_3=U_4=0$ & 3.4904 $\times$ 10$^5$ & 374.8252& 8.4191 \\
		\hline
	\end{tabular}
	\caption{Table depicting the average values of the susceptible cells, infected cells and the viral load with respect to  two  control interventions administered at a time. } \label{t2}
\end{table} \vspace{.4cm}	

From  the  table {\ref{t2}}  the average values of the susceptible cells, infected cells and the viral load with respect to   two  control interventions administered at a time  is listed. From this table {\ref{t2}} it can be seen that the combination of INF and  Lopinavir/Ritonavir  ($U_3 = U_3^*, U_4 = U_4^*$) reduces the infected cells and viral load the best compared to other combination of drugs when administered   followed by drug combination Remdesivir and  Lopinavir/Ritonavir ($U_2 = U_2^*, U_4 = U_4^*$). On the other hand,  the drug combination of Arbidol and
Remdeivir $(U_1 = {{U_1}^*}, U_2 = {U_2}^*) $ does the
best job in reducing the susceptible cells. \vspace{.25cm}

\begin{figure}[hbt!]
	\begin{center}
		\includegraphics[width=4in, height=2.4in, angle=0]{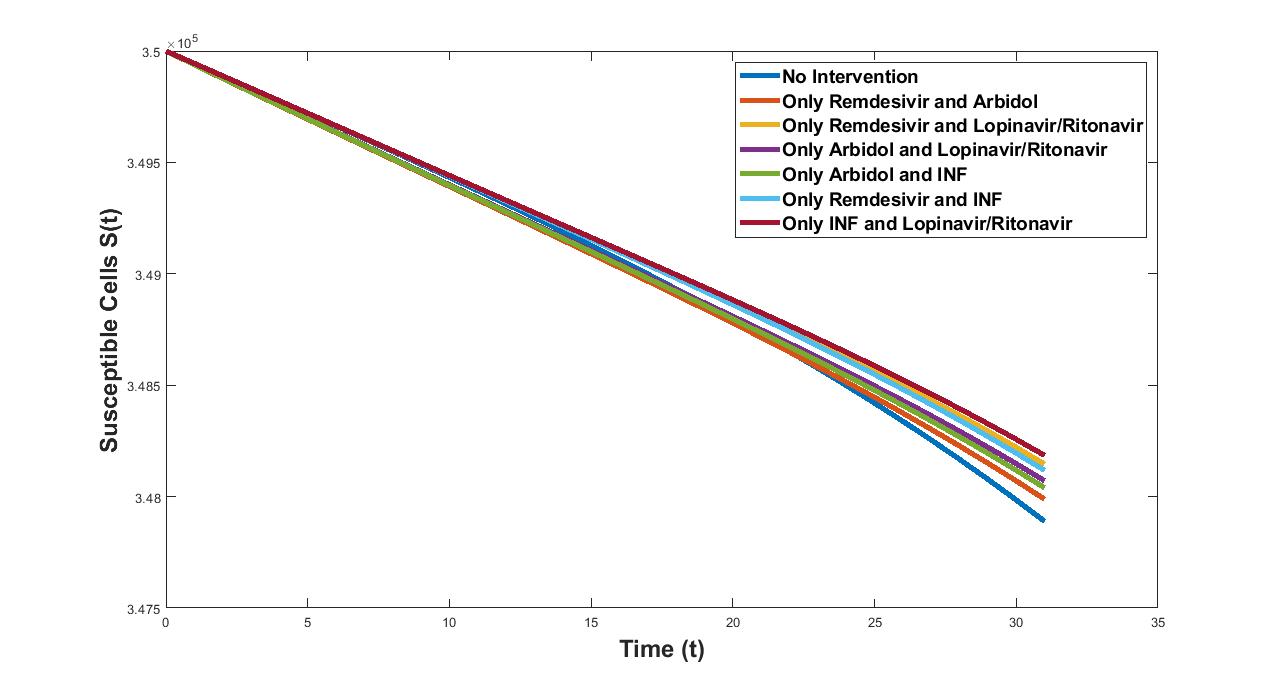}
		\caption{    Figure depicting the dynamics of Susceptible cells ($S$)   under the optimal controls $U_{1}^{*}, U_{2}^{*}$, $U_{3}^{*},U_{4}^{*}.$ Each curve represents the dynamics  of susceptible cells with respect to  two  control interventions administered at a time.} \label{c5}
	\end{center}
\end{figure}

\begin{figure}[hbt!]
	\begin{center}
		\includegraphics[width=4in, height=2.4in, angle=0]{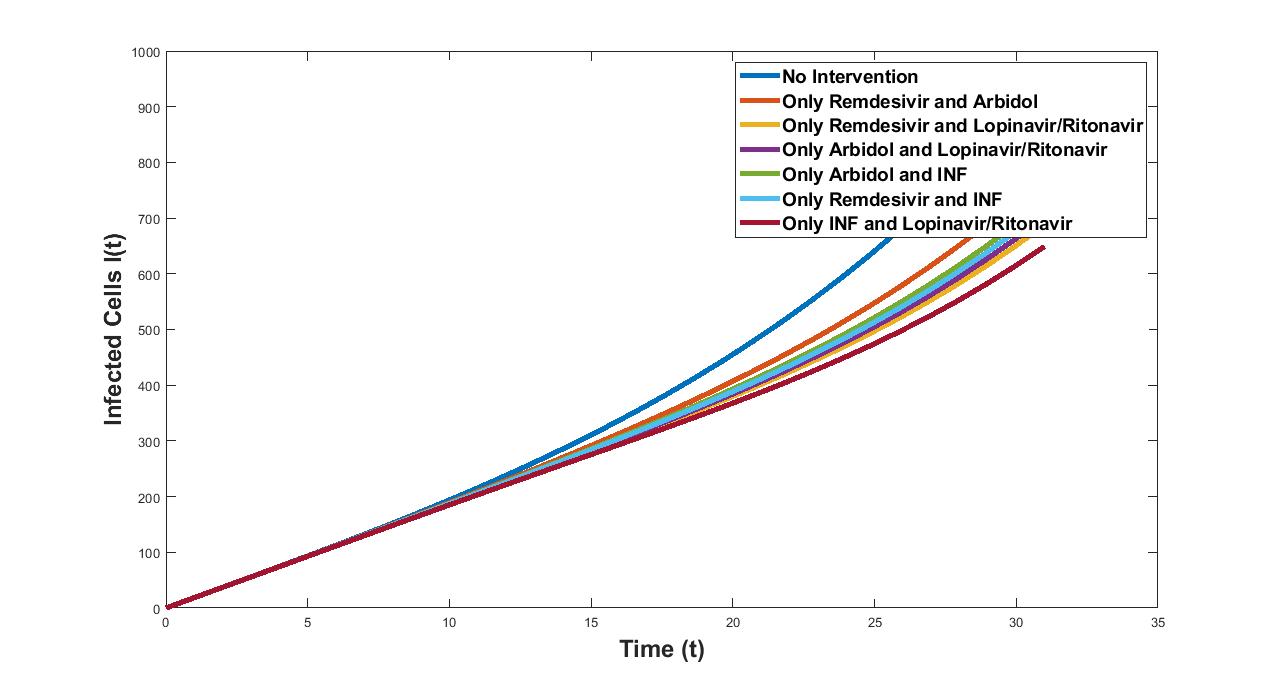}
		\caption{   Figure depicting the dynamics of Infected cells ($I$)   under the optimal controls $U_{1}^{*}, U_{2}^{*}$, $U_{3}^{*},U_{4}^{*}.$ Each curve represents the dynamics  of infected cells with respect to  two  control interventions administered at a time.} \label{c6}
	\end{center}
\end{figure}

\begin{figure}[hbt!]
	\begin{center}
		\includegraphics[width=4in, height=2.4in, angle=0]{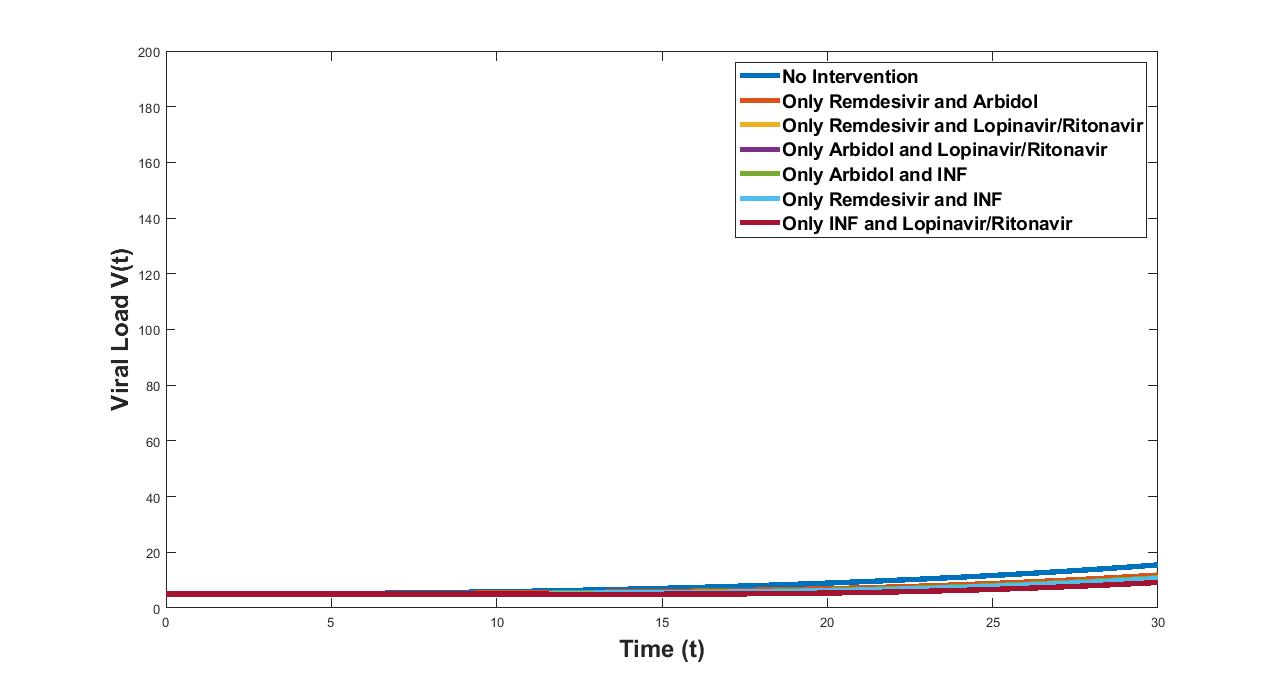}
		\caption{   Figure depicting the dynamics of Viral load ($I$)   under the optimal controls $U_{1}^{*}, U_{2}^{*}$, $U_{3}^{*},U_{4}^{*}.$ Each curve represents the dynamics  of viral load with respect to  two  control interventions administered at a time. } \label{c7}
	\end{center}
\end{figure}  \vspace{.75cm}

\subsection{THREE DRUGS / INTERVENTIONS}  \vspace{.25cm}

In this section we study the dynamics  of susceptible, infected  cells and viral load when   three  control interventions are administered at a time.  Figures  \ref{c8}, \ref{c9}, \ref{c10} depict the susceptible, infected population and viral load.  \vspace{.4cm}

\begin{table}[ht!]
	\centering 
	\begin{tabular}{|l|l|l|l|} 
		\hline
		\textbf{Drug Combinations} &  \textbf{Avg Susceptible cells}&  \textbf{Avg Infected cells}& \textbf{Avg Viral load}\\
		\hline
		$U_1=U_{1}^{*},U_2=0,U_3=U_{3}^{*},U_4=U_{4}^{*}$ & 3.4910 $\times$ 10$^5$ & 281.8542 & 5.3993  \\
		\hline
		$U_1=U_{1}^{*},U_2=U_{2}^{*},U_3=0,U_4=U_{4}^{*}$ & 3.4909 $\times$ 10$^5$ & 291.7824& 5.7105 \\
		\hline
		$U_1=U_{1}^{*},U_2=U_{2}^{*},U_3=U_{3}^{*},U_4=0$ & 3.4908 $\times$ 10$^5$ & 298.7510& 5.9310  \\
		\hline
		$U_1=0,U_2=U_{2}^{*},U_3=U_{3}^{*},U_4=U_{4}^{*}$ & 3.4911 $\times$ 10$^5$& 307.0594& 6.1791 \\
		\hline
		$U_1=0,U_2=0,U_3=0,U_4=0$ & 3.4904 $\times$ 10$^5$ & 374.8252 & 8.4191 \\
		\hline
	\end{tabular} 
	\caption{Table depicting the average values of the susceptible cells, infected cells and the viral load with respect to  two  three control interventions administered at a time. } \label{t3}
\end{table} \vspace{.4cm}	

From  the  table {\ref{t3}}  the average values of the susceptible cells, infected cells and the viral load with respect to   three  control interventions administered at a time  is listed. From this table {\ref{t3}} it can be seen that the combination of Arbidol, INF and Lopinavir/Ritonavir ($U_1 = U_1^*, U_3 = U_3^*, U_4 = U_4^*$) reduces the infected cells and viral load the best compared to other combination of drugs when administered   followed by drug combination Arbidol, Remdsivir and  Lopinavir/Ritonavir ($U_1 = U_1^*, U_2 = U_2^*, U_4 = U_4^*$). On the other hand, surprisingly  the average number of the susceptible cells are lesser in no control intervention case compared to any other case having three control interventions. This can be attributed to side effects of multiple drug interventions involving more than two drug administration at a time to patient. \vspace{.4cm}

\begin{figure}[hbt!]
	\begin{center}
		\includegraphics[width=4in, height=2.4in, angle=0]{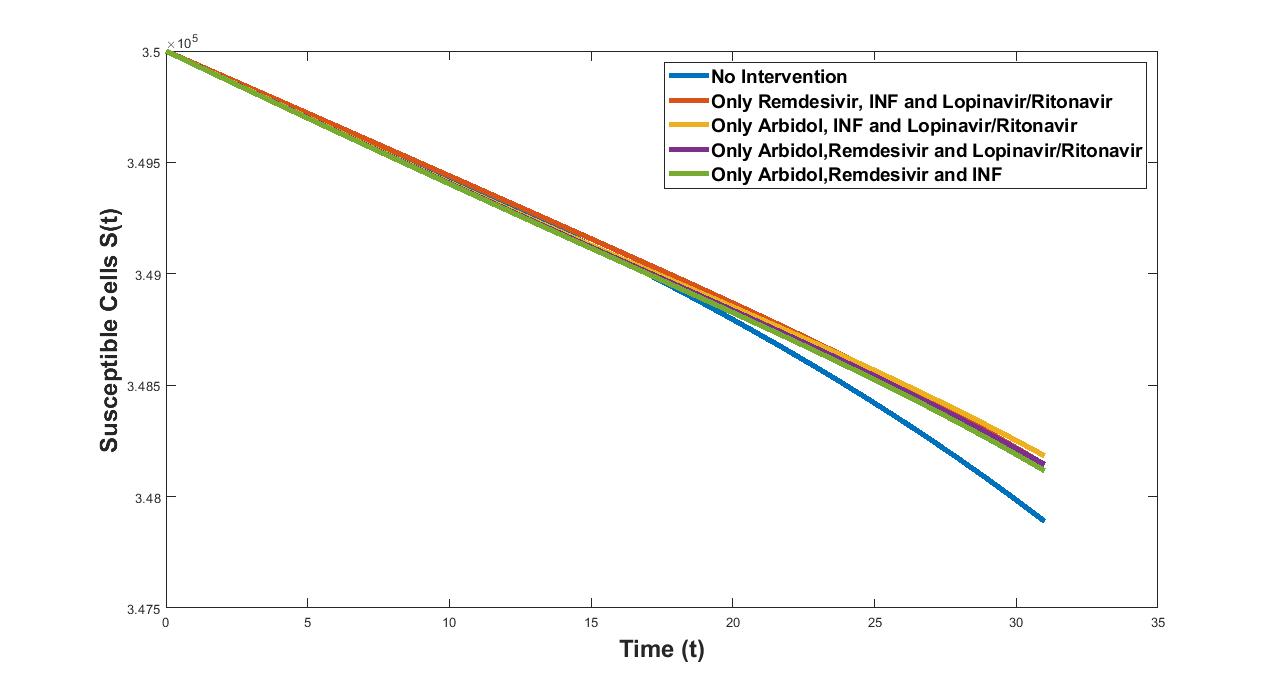}
		\caption{    Figure depicting the dynamics of Susceptible cells ($S$)   under the optimal controls $U_{1}^{*}, U_{2}^{*}$, $U_{3}^{*},U_{4}^{*}.$ Each curve represents the dynamics  of susceptible cells with respect to  three  control interventions administered at a time.} \label{c8}
	\end{center}
\end{figure}

\begin{figure}[hbt!]
	\begin{center}
		\includegraphics[width=4in, height=2.4in, angle=0]{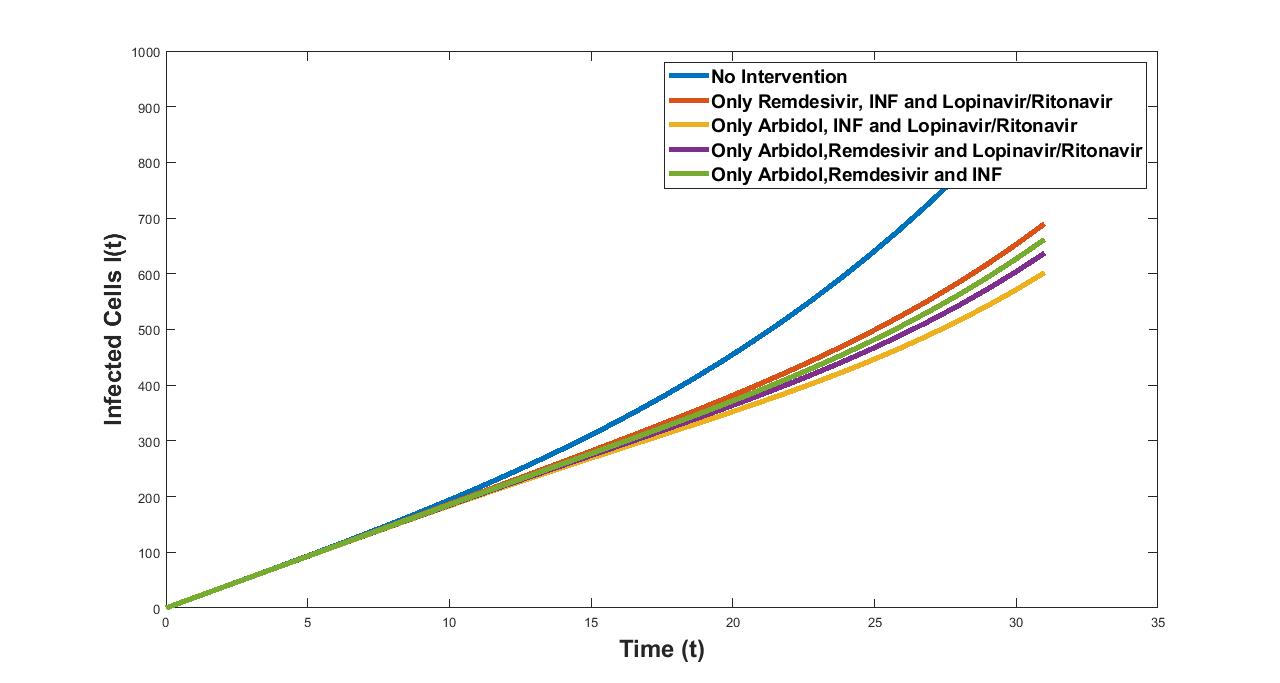}
		\caption{   Figure depicting the dynamics of Infected cells ($I$)   under the optimal controls $U_{1}^{*}, U_{2}^{*}$, $U_{3}^{*},U_{4}^{*}.$ Each curve represents the dynamics  of infected cells with respect to  three  control interventions administered at a time.}  \label{c9}
	\end{center}
\end{figure}

\begin{figure}[hbt!]
	\begin{center}
		\includegraphics[width=4in, height=2.4in, angle=0]{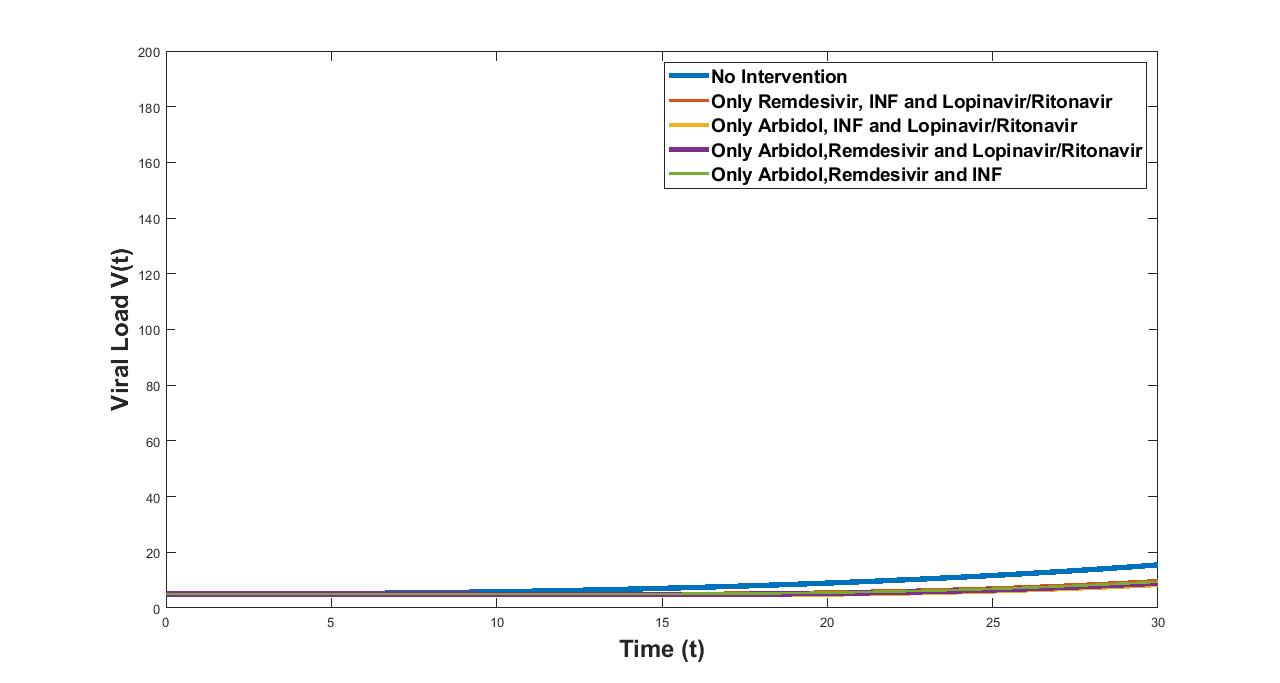}
		\caption{   Figure depicting the dynamics of Viral load ($I$)   under the optimal controls $U_{1}^{*}, U_{2}^{*}$, $U_{3}^{*},U_{4}^{*}.$ Each curve represents the dynamics  of viral load with respect to  three  control interventions administered at a time. }  \label{c10}
	\end{center}
\end{figure}

\vspace{.75cm}

\subsection{FOUR DRUGS / INTERVENTIONS}  \vspace{.25cm}

In this section we study the dynamics  of susceptible, infected  cells and viral load when   all the four  control interventions are administered at a time.  Figures  \ref{c11}, \ref{c12}, \ref{c13} depict the susceptible, infected population and viral load.  \vspace{.4cm}

\begin{table}[ht!]
	\centering 
	\begin{tabular}{|l|l|l|l|} 
		\hline
		\textbf{Drug Combinations} &  \textbf{Avg Suceptible cells}&  \textbf{Avg Infected cells}& \textbf{Avg Viral load}\\  
		\hline 
		$U_1=U_1^*,U_2=U_2^*,U_3=U_3^*,U_4=U_4^*$ & 3.4911 $\times $ 10$^5$ & 272.1884& 5.0925 \\ 
		\hline
		$U_1=U_2=U_3=U_4=0$ & 3.4904 $\times$ 10$^5$ & 374.8252 & 8.4191  \\
		\hline
	\end{tabular}
	\caption{Table depicting the average values of the susceptible cells, infected cells and the viral load with respect to  two  three control interventions administered at a time. } \label{t4}
\end{table}

\vspace{.4cm}

From  the  table {\ref{t4}}  the average values of the susceptible cells, infected cells and the viral load with respect to  all four control interventions administered at a time  is listed. From this table {\ref{t4}} it can be seen that the combination of all four drug interventions reduce the infected cells and viral load  compared to no intervention case. Like in the three drug control intervention case here also  the average number of the susceptible cells are lesser in no control intervention case compared to all four control interventions case. This again can be attributed to side effects of multiple drug interventions involving more than two drug administration at a time to patient. \vspace{.4cm}

\begin{figure}[hbt!]
	\begin{center}
		\includegraphics[width=4in, height=2.3in, angle=0]{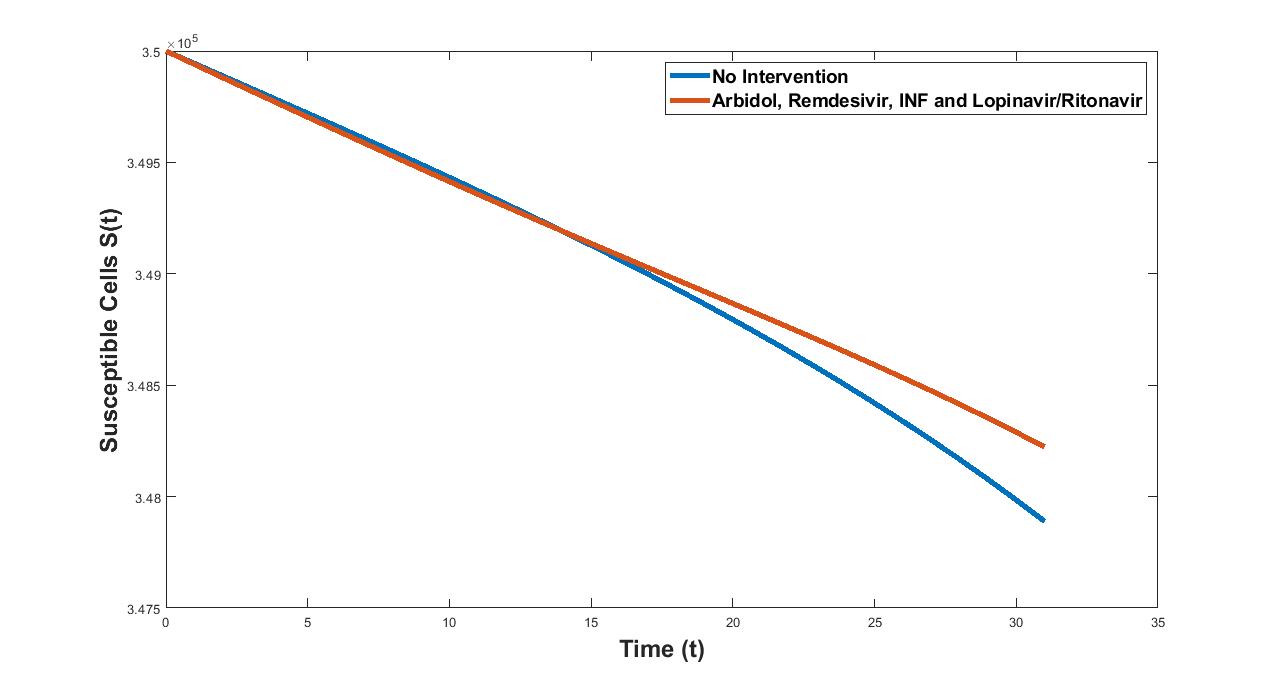}
		\caption{    Figure depicting the dynamics of Susceptible cells ($S$)   under the optimal controls $U_{1}^{*}, U_{2}^{*}$, $U_{3}^{*},U_{4}^{*}.$ Red curve represents the dynamics  of susceptible cells with respect to  all control interventions and blue curve represents susceptible cells for no intervention case.}  \label{c11}
	\end{center}
\end{figure}

\begin{figure}[hbt!]
	\begin{center}
		\includegraphics[width=4in, height=2.3in, angle=0]{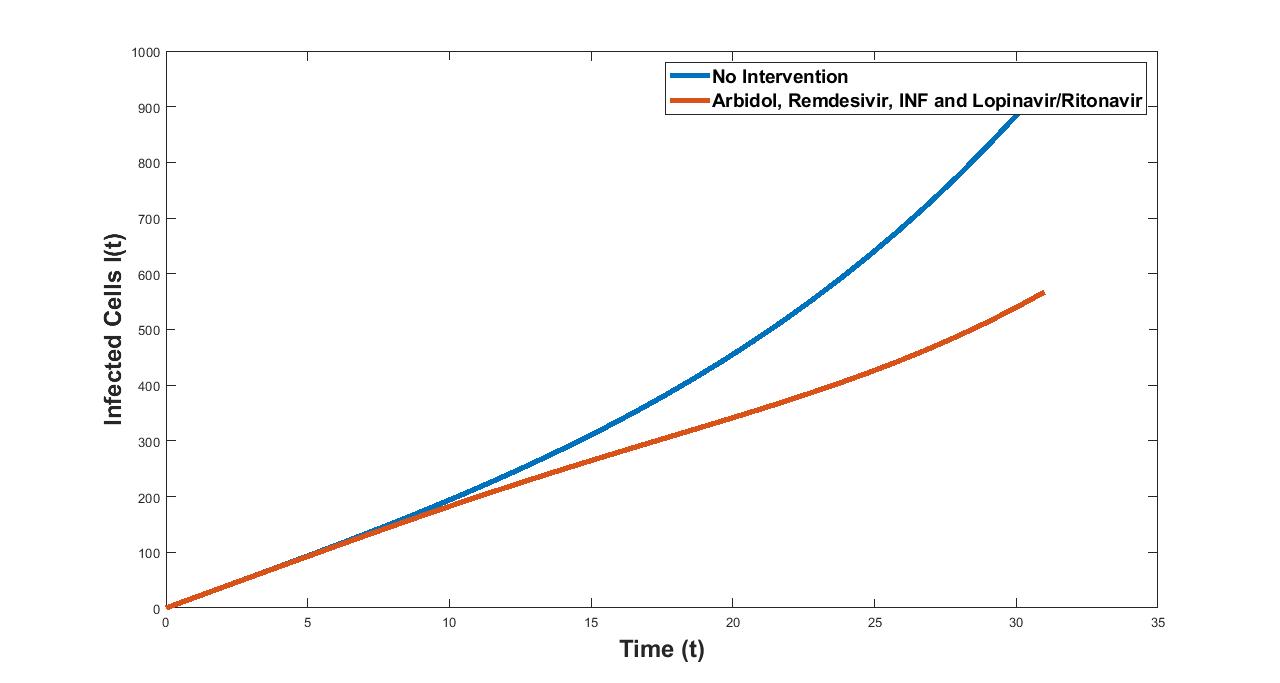}
		\caption{  Figure depicting the dynamics of Infected cells ($I$)   under the optimal controls $U_{1}^{*}, U_{2}^{*}$, $U_{3}^{*},U_{4}^{*}.$ Red curve represents the dynamics  of infected cells with respect to  all control interventions and blue curve represents infected cells for no intervention case.}  \label{c12}
	\end{center}
\end{figure}

\begin{figure}[hbt!]
	\begin{center}
		\includegraphics[width=4in, height=2.3in, angle=0]{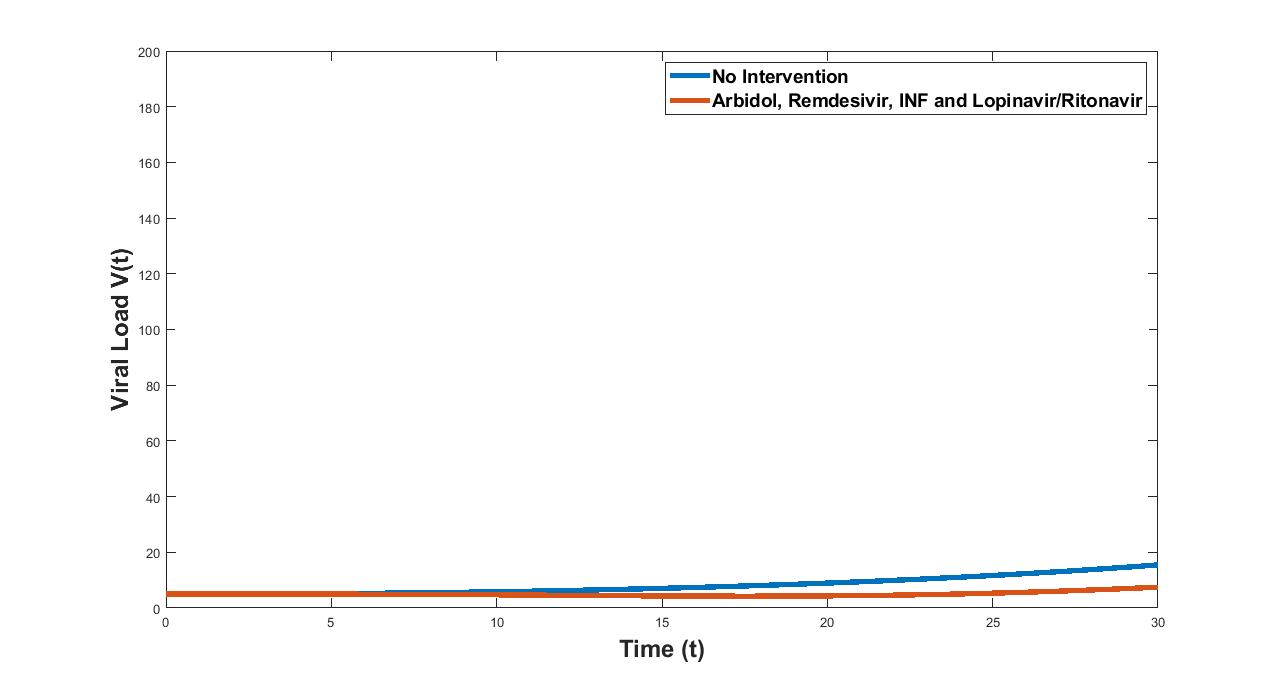}
		\caption{   Figure depicting the dynamics of Viral load ($I$)   under the optimal controls $U_{1}^{*}, U_{2}^{*}$, $U_{3}^{*},U_{4}^{*}.$ Red curve represents the dynamics  of viral load with respect to  all control interventions and blue curve represents viral load for no intervention case. } \label{c13}
	\end{center}
\end{figure}

\vspace{.75cm}

\section{Comparative effectiveness study} \vspace{.25cm}

In this section we do the comparative effectiveness study for the system 
\begin{eqnarray}
	\frac{dS}{dt}& =&  \omega \ - \beta SV   -\mu S  \label{sec6equ1} \\ 
	\frac{dI}{dt} &=& \beta SV \ -  { \bigg(d_{1}  + d_{2}  +  d_{4} + d_{5}+ d_{6}\bigg)I} -\mu I   \label{sec6equ2}\\ 
	\frac{dV}{dt} &=&  \alpha I   
	-\bigg( b_{1}  + b_{2}   +  b_{4}+ b_{5} + b_{6}\bigg)V    \ -  \mu_{1} V \label{sec6equ3}
\end{eqnarray}  \\
The basic reproductive number for the system (\ref{sec6equ1}) - (\ref{sec6equ3}) as obtained in \cite{ref1} is given by
\begin{equation}
	\mathcal{R}_{0} = \frac{ \beta \alpha \omega}{\mu (b_{1}+b_{2}+b_{4}+b_{5}+b_{6}+\mu_1)(d_{1}+d_{2}+d_{4}+d_{5}+d_{6}+\mu)}
	\num \label{eqn2}
\end{equation}

The disease-free equilibrium for the system can be seen to be
\begin{equation}
	E^{0} = \left( S^{0} , I^{0} , V^{0} \right) = \left( \frac{\omega}{\mu}, 0, 0 \right) \num \label{eqn3}
\end{equation}
and the endemic equilibrium to be

\begin{equation*}
	\begin{aligned}
		\overline{S} &= \frac{\left(b_{1}+b_{2}+b_{4}+b_{5}+b_{6}+\mu_1\right)\left(d_{1}+d_{2}+d_{4}+d_{5}+d_{6}+\mu\right)}{\alpha \beta}\\
		\overline{I} &= \frac{\alpha \beta \omega - \mu \left(b_{1}+b_{2}+b_{4}+b_{5}+b_{6}+\mu_1\right)\left(d_{1}+d_{2}+d_{4}+d_{5}+d_{6}+\mu\right)}{\alpha \beta \left(d_{1}+d_{2}+d_{4}+d_{5}+d_{6}+\mu\right)}\\
		\overline{V} &= \frac{\alpha \beta \omega - \mu \left(b_{1}+b_{2}+b_{4}+b_{5}+b_{6}+\mu_1\right)\left(d_{1}+d_{2}+d_{4}+d_{5}+d_{6}+\mu\right)}{\beta \left(b_{1}+b_{2}+b_{4}+b_{5}+b_{6}+\mu_1\right)\left(d_{1}+d_{2}+d_{4}+d_{5}+d_{6}+\mu\right)}   \label{eqn4}
	\end{aligned}
\end{equation*}

Broadly we consider two kinds of interventions for this comparative effectiveness study.
\begin{itemize}
	\item [1.] Drugs that inhibit viral replication: All the fours interventions Arbidol, Remdesivir, Interferon, Lopinavir/Ritonavir  does this job.
	So we now choose $\alpha$ to be $\alpha(1-\epsilon_1)(1-\epsilon_2)(1-\epsilon_3)(1-\epsilon_4)$, where $\epsilon_1,\epsilon_2,\epsilon_3,\epsilon_4$ are chosen based on the  efficacy of the drugs Arbidol, Remdesivir, Interferon and Lopinavir/Ritonavir respectively.
	
	\item [2.] Drugs that block virus binding to susceptible cells : Arbidol does this job.
	So we now choose $\beta$ to be $\beta(1-\gamma)$, where $\gamma$ is the efficacy of the drug Arbidol in blocking virus binding to susceptible cells.
\end{itemize}

$\mathcal{R}_{0}$ plays a crucial role in understanding the spread of infection in the individual and $\overline{V}$ determines the infectivity of virus in an individual. Taking  the two kinds of interventions  into consideration,   we now have modified   basic reproductive number $\mathcal{R}_{E}$  and modified virus count $\overline{V}_{E}$ of the endemic equilibrium to be 

\begin{equation*}
	\begin{aligned}
		\mathcal{R}_{E} &=  \frac{ \alpha(1-\epsilon_1)(1-\epsilon_2)(1-\epsilon_3)(1-\epsilon_4) \beta(1-\gamma) \omega}{\mu (b_{1}+b_{2}+b_{4}+b_{5}+b_{6}+\mu_1)(d_{1}+d_{2}+d_{4}+d_{5}+d_{6}+\mu)} \\
		\overline{V_{E}} &= \frac{\alpha(1-\epsilon_1)(1-\epsilon_2)(1-\epsilon_3)(1-\epsilon_4) \omega }{ \left(b_{1}+b_{2}+b_{4}+b_{5}+b_{6}+\mu_1\right)\left(d_{1}+d_{2}+d_{4}+d_{5}+d_{6}+\mu\right)} -  \frac{ \mu}{\beta(1-\gamma)}   \label{eqn5}
	\end{aligned}
\end{equation*}

The efficacy of these interventions is taken based on hazard ratios.
For Arbidol, we choose $\gamma =$ 0.7 and $\epsilon_1 = $ 0.173 ; 
for Remdesivir, we choose $\epsilon_2$ = 0.67 ;
for Interferon, we choose $\epsilon_3$ = 0.625;
for Lopinavir/Ritonavir, we choose $\epsilon_4$ = 0.6; \\

We now do the comparative effectiveness study of these interventions by calculating the percentage reduction of $\mathcal{R}_0$ and $\overline{V}$ for single and multiple combination of these  interventions. Percentage reduction of $\mathcal{R}_0$ and  $\overline{V}$ are given by

Percentage reduction of $\mathcal{R}_{0}  = \bigg[ \frac{\mathcal{R}_{0} - \mathcal{R}_{E_{j}}}{\mathcal{R}_{0}} \bigg] \times 100$

Percentage reduction of $\overline{V} = \bigg[ \frac{\overline{V} - \overline{V}_{E_j}}{\overline{V}} \bigg] \times 100$

where j stands for $\epsilon_1,\epsilon_2,\epsilon_3,\epsilon_4,\gamma$ or combinations thereof.

Since we have 4 drugs, we consider 16 ($= 2^4$) different combinations of these drugs.

\begin{table}[!htb]
	\centering
	\begin{tabular}{| c | c | c | c | c | c |}
		\hline
		& & \textbf{\%age change} & & \textbf{\%age change} & \\
		\textbf{No.} & \textbf{Intervention} & \textbf{in $\mathcal{R}_{0}$} & \textbf{Rank}  & \textbf{in $\overline{V}$ } & \textbf{Rank} \\
		\hline
		1 & Nil & 0  &  1  &  0  &  1  \\
		\hline
		2 & $\epsilon_1 \gamma$ & 75.19  &  5  &  270.39  &  9  \\
		\hline
		3 & $\epsilon_2$ & 67  &  4  &  134.66  &  4  \\
		\hline
		4 & $\epsilon_3$ & 62.5	 & 	3	 & 	125.61	 & 	3	 \\
		\hline
		5 & $\epsilon_4$ & 60	 & 	2	 & 	120.59	 & 	2	 \\
		\hline
		6 & $\epsilon_2 \epsilon_1 \gamma$ & 91.81	 & 	11	 & 	381.75	 & 	12	 \\
		\hline
		7 & $\epsilon_3 \epsilon_1 \gamma$ & 90.7	 & 	10	 & 	374.27	 & 	11	 \\
		\hline
		8 & $\epsilon_4 \epsilon_1 \gamma$ & 90.08	 & 	9	 & 	370.12	 & 	10	 \\
		\hline
		9 & $\epsilon_2 \epsilon_3$ & 87.63	 & 	8	 & 	176.11	 & 	7	 \\
		\hline
		10 & $\epsilon_2 \epsilon_4$ & 86.8	 & 	7	 & 	174.45	 & 	6	 \\
		\hline
		11 & $\epsilon_3 \epsilon_4$ & 85	 & 	6	 & 	170.83	 & 	5	 \\
		\hline
		12 & $\epsilon_2 \epsilon_4 \epsilon_1 \gamma$ & 96.73	 & 	14	 & 	414.66	 & 	14	 \\
		\hline
		13 & $\epsilon_2 \epsilon_3 \epsilon_1 \gamma$ & 96.93	 & 	15	 & 	416.03	 & 	15	 \\
		\hline
		14 & $\epsilon_3 \epsilon_4 \epsilon_1 \gamma$ & 96.28	 & 	13	 & 	411.67	 & 	13	 \\
		\hline
		15 & $\epsilon_2 \epsilon_3 \epsilon_4$ & 95.05	 & 	12	 & 	191.03	 & 	8	 \\
		\hline
		16 & $\epsilon_1 \epsilon_2 \epsilon_3 \epsilon_4 \gamma$ & 98.77	 & 	16	 & 	428.37	 & 	16	 \\
		\hline
	\end{tabular} 
	\caption{Comparative Effectiveness Study} \label{comp}
\end{table} \vspace{.25cm}

In the table \ref{comp} the comparative effectiveness is calculated and measured on a scale from 1 to 16 with 1 denoting the lowest comparative effectiveness while 16 denoting the highest comparative effectiveness. The conclusions from this study are the following.
\begin{itemize}
	\item [1.] When single drug/intervention is administered, Arbidol outperforms other drugs/interventions w.r.t reduction  both $\mathcal{R}_{0}$ and $\overline{V}$ (refer rows 2 to 5 in table \ref{comp}). 
	\item [2.] When two drugs/interventions are administered, Remdesivir and Arbidol combination performs better than any combination of two drugs/interventions in reducing $\mathcal{R}_{0}$ and $\overline{V}$ (refer rows 6 to 11 in table \ref{comp}).
	\item [3.] When three drugs/interventions are administered, Remdesivir,Interferon and Arbidol combination performs better than any combination of three drugs/interventions in reducing $\mathcal{R}_{0}$ and $\overline{V}$ (refer rows 12 to 15 in table \ref{comp}).
	\item [4.] The best reduction in $\mathcal{R}_{0}$ and $\overline{V}$ is seen (refer row 16 in table \ref{comp}) when all the four drugs/interventions are applied in combination.
\end{itemize}

\section{ Discussion and Conclusions} \vspace{.25cm}

\quad 

\quad In this work we have considered four drug interventions namely, Arbidol, Remdesivir, Interferon, Lopinavir/Ritonavir  and studied their efficacy for treatment of COVID-19 when applied individually or in combination. This study is done in two ways. \vspace{.25cm}

\quad The first by modeling these interventions as control interventions and studying the optimal control problem. In this approach, we find a control for a Dynamical system over a period of time such that an objective function is optimised. It has numerous applications in both science and engineering. For this modelling study, the dynamical system is the dynamics of cells and virus in a human body. An optimal control problem with drug-treatment control on the within host dynamical system is proposed and studied.
The second study was  the comparative effectiveness study. This research study adopts many of the same approaches and methodologies as cost-effectiveness analysis, including the use of incremental cost-effectiveness ratios (ICERs) and quality-adjusted life years (QALYs). Engaging various strategies in this process, while difficult, makes research more applicable through providing information that improves patient decision making \vspace{.25cm}

\quad  Conclusions  from the optimal control studies and comparative effectiveness studies suggest  the following.
\begin{itemize}
	\item [1.]  All the drugs when administered individually or in combination reduce the infected cells and viral load significantly.
	\item [2.]  The  average  infected cell count and  viral load decreased the most when all the four interventions were applied together.
	\item [3.]  The average susceptible cell count decreased the best when  Arbidol alone was administered.
	
	\item [4.] The best reduction in basic reproduction number and viral count was obtained when all the four drugs/interventions are applied in combination.
\end{itemize}

\vspace{.25cm}
{\flushleft{  \textbf{ACKNOWLEDGEMENTS} }}\vspace{.25cm}

The authors from SSSIHL  acknowledge the support of SSSIHL administration for this work.  \vspace{.25cm}
	
	\bibliographystyle{amsplain}
	\bibliography{reference}

\providecommand{\bysame}{\leavevmode\hbox to3em{\hrulefill}\thinspace}
\providecommand{\MR}{\relax\ifhmode\unskip\space\fi MR }
% \MRhref is called by the amsart/book/proc definition of \MR.
\providecommand{\MRhref}[2]{%
  \href{http://www.ams.org/mathscinet-getitem?mr=#1}{#2}
}
\providecommand{\href}[2]{#2}
\begin{thebibliography}{10}

\bibitem{5}
Tian-Mu Chen, Jia Rui, Qiu-Peng Wang, Ze-Yu Zhao, Jing-An Cui, and Ling Yin,
  \emph{A mathematical model for simulating the phase-based transmissibility of
  a novel coronavirus}, Infectious Diseases of Poverty \textbf{9} (2020),
  no.~1, 1--8.

\bibitem{ref1}
Bishal Chhetri and et~al., \emph{https://arxiv.org/abs/2005.02261, cornell
  university, new york, accessed on 07-08-2020},  (2020).

\bibitem{IFNhazard}
Effat Davoudi-Monfared, Hamid Rahmani, Hossein Khalili, Mahboubeh
  Hajiabdolbaghi, Mohamadreza Salehi, Ladan Abbasian, Hossein Kazemzadeh, and
  Mir~Saeed Yekaninejad, \emph{Efficacy and safety of interferon beta-1a in
  treatment of severe covid-19: A randomized clinical trial}, medRxiv (2020).

\bibitem{djidjou2020optimal}
Ramses Djidjou-Demasse, Yannis Michalakis, Marc Choisy, Micea~T Sofonea, and
  Samuel Alizon, \emph{Optimal covid-19 epidemic control until vaccine
  deployment}, medRxiv (2020).

\bibitem{ea2020host}
Hernandez~Vargas EA and JX~Velasco-Hernandez, \emph{In-host modelling of
  covid-19 kinetics in humans}, medRxiv (2020).

\bibitem{garira2020development}
Winston Garira and Dephney Mathebula, \emph{Development and application of
  multiscale models of acute viral infections in intervention research},
  Mathematical Methods in the Applied Sciences \textbf{43} (2020), no.~6,
  3280--3306.

\bibitem{remdesivirHR}
Jonathan Grein, Norio Ohmagari, Daniel Shin, George Diaz, Erika Asperges,
  Antonella Castagna, Torsten Feldt, Gary Green, Margaret~L Green,
  Fran{\c{c}}ois-Xavier Lescure, et~al., \emph{Compassionate use of remdesivir
  for patients with severe covid-19}, New England Journal of Medicine
  \textbf{382} (2020), no.~24, 2327--2336.

\bibitem{7}
Adam~J Kucharski, Timothy~W Russell, Charlie Diamond, Yang Liu, John Edmunds,
  Sebastian Funk, Rosalind~M Eggo, Fiona Sun, Mark Jit, James~D Munday, et~al.,
  \emph{Early dynamics of transmission and control of covid-19: a mathematical
  modelling study}, The lancet infectious diseases (2020).

\bibitem{lee2010optimal}
Sunmi Lee, Gerardo Chowell, and Carlos Castillo-Ch{\'a}vez, \emph{Optimal
  control for pandemic influenza: the role of limited antiviral treatment and
  isolation}, Journal of Theoretical Biology \textbf{265} (2010), no.~2,
  136--150.

\bibitem{lopinavirHR}
Xiaochen Li, Shuyun Xu, Muqing Yu, Ke~Wang, Yu~Tao, Ying Zhou, Jing Shi, Min
  Zhou, Bo~Wu, Zhenyu Yang, et~al., \emph{Risk factors for severity and
  mortality in adult covid-19 inpatients in wuhan}, Journal of Allergy and
  Clinical Immunology (2020).

\bibitem{liberzon2011calculus}
Daniel Liberzon, \emph{Calculus of variations and optimal control theory: a
  concise introduction}, Princeton University Press, 2011.

\bibitem{4}
Qianying Lin, Shi Zhao, Daozhou Gao, Yijun Lou, Shu Yang, Salihu~S Musa,
  Maggie~H Wang, Yongli Cai, Weiming Wang, Lin Yang, et~al., \emph{A conceptual
  model for the coronavirus disease 2019 (covid-19) outbreak in wuhan, china
  with individual reaction and governmental action}, International journal of
  infectious diseases \textbf{93} (2020), 211--216.

\bibitem{arbidolHR}
Qibin Liu, Xuemin Fang, Lu~Tian, Xianxiang Chen, Ungil Chung, Ke~Wang, Dan Li,
  Xiyong Dai, Qi~Zhu, Feng Xu, et~al., \emph{The effect of arbidol
  hydrochloride on reducing mortality of covid-19 patients: a retrospective
  study of real world date from three hospitals in wuhan}, medRxiv (2020).

\bibitem{makarov2013picard}
Evgeny Makarov and Bas Spitters, \emph{The picard algorithm for ordinary
  differential equations in coq}, International Conference on Interactive
  Theorem Proving, Springer, 2013, pp.~463--468.

\bibitem{2}
Arti Mishra and Sunita Gakkhar, \emph{A micro-epidemic model for primary dengue
  infection}, Communications in Nonlinear Science and Numerical Simulation
  \textbf{47} (2017), 426--437.

\bibitem{qin2020dysregulation}
Chuan Qin, Luoqi Zhou, Ziwei Hu, Shuoqi Zhang, Sheng Yang, Yu~Tao, Cuihong Xie,
  Ke~Ma, Ke~Shang, Wei Wang, et~al., \emph{Dysregulation of immune response in
  patients with covid-19 in wuhan, china}, Clinical Infectious Diseases (2020).

\bibitem{tu2020review}
Yung-Fang Tu, Chian-Shiu Chien, Aliaksandr~A Yarmishyn, Yi-Ying Lin, Yung-Hung
  Luo, Yi-Tsung Lin, Wei-Yi Lai, De-Ming Yang, Shih-Jie Chou, Yi-Ping Yang,
  et~al., \emph{A review of sars-cov-2 and the ongoing clinical trials},
  International Journal of Molecular Sciences \textbf{21} (2020), no.~7, 2657.

\bibitem{wang2019global}
Ben~X Wang and Eleanor~N Fish, \emph{Global virus outbreaks: Interferons as 1st
  responders}, Seminars in immunology, vol.~43, Elsevier, 2019, p.~101300.

\bibitem{wolfel2020virological}
Roman W{\"o}lfel, Victor~M Corman, Wolfgang Guggemos, Michael Seilmaier, Sabine
  Zange, Marcel~A M{\"u}ller, Daniela Niemeyer, Terry~C Jones, Patrick Vollmar,
  Camilla Rothe, et~al., \emph{Virological assessment of hospitalized patients
  with covid-2019}, Nature \textbf{581} (2020), no.~7809, 465--469.

\bibitem{1}
worldometer, \emph{https://www.worldometers.info/coronavirus/, accessed on
  07-08-2020}, 2020.

\bibitem{6}
Chayu Yang and Jin Wang, \emph{A mathematical model for the novel coronavirus
  epidemic in wuhan, china}, Mathematical Biosciences and Engineering
  \textbf{17} (2020), no.~3, 2708--2724.

\bibitem{ref2}
Chunguang Yang, Chunjin Ke, Daoyuan Yue, Wengang Li, Zhiquan Hu, Wei Liu,
  Shuhua Hu, Shaogang Wang, and Jihong Liu, \emph{Effectiveness of arbidol for
  covid-19 prevention in health professionals}, Frontiers in Public Health
  \textbf{8} (2020), 249.

\end{thebibliography}
	
	\section{Authors Contribution }  \vspace{.25cm} 
	
	\begin{itemize}
		\item{BC carried out data acquisition, data analysis.}
		\item{VMB conceptualized the study, framed objectives and study design, involved in data interpretation and reviewed the manuscript. }
		\item{DKKV contributed substantial in conducting mathematical modelling of the data and prepared the manuscript.}
		\item{AVS conducted data analysis and application of mathematical functions.}
		\item{BP played crucial role in data acquisition, analysis and interpretation of key mathematical functions. }
		\item{SM helped in conceptualization, data synthesis for biochemical parameters, manuscript preparation.}
		\item{PD conducted critical review and synthesis of the manuscript.}
		\item{CBS helped in critical review of the manuscript and provided suggestions for improvement.}
	\end{itemize}
	
	\vspace{.25cm}

\end{document}